\newtheorem{theorem}{Theorem}
\newtheorem{conjecture}{Conjecture}
\newtheorem{problem}{Problem}
\newcommand{\COMPLEXITYCLASSNAMESTYLE}[1]{\mathsf{#1}}
\newcommand{\PROBLEMNAMESTYLE}[1]{\textsc{#1}}
\newcommand{\PROBLEMNAMEDELIMITER}{\hspace{0.5mm}}
\newcommand{\AIKP}{\XIKP{Assoc}}
\newcommand{\CLASSP}{\COMPLEXITYCLASSNAMESTYLE{P}}
\newcommand{\COFROB}{\PROBLEMNAMESTYLE{co\FROB}}
\newcommand{\CONP}{\COMPLEXITYCLASSNAMESTYLE{co}\NP}
\newcommand{\DELTAP}[1]{\COMPLEXITYCLASSNAMESTYLE{\Delta}_{#1}^\CLASSP}
\newcommand{\DIFFP}[1]{\COMPLEXITYCLASSNAMESTYLE{D}_{#1}^\CLASSP}
\newcommand{\DP}{\COMPLEXITYCLASSNAMESTYLE{DP}}
\newcommand{\EFROB}{\EXACT\PROBLEMNAMEDELIMITER\FROB}
\newcommand{\EXACT}{\PROBLEMNAMESTYLE{Exact}}
\newcommand{\FFROB}{\PROBLEMNAMESTYLE{F}\FROB}
\newcommand{\FP}{\FUNC\CLASSP}
\newcommand{\FROB}{\PROBLEMNAMESTYLE{Frobenius}}
\newcommand{\FROBCOFROB}{\FROB\text{-}\COFROB}
\newcommand{\FUNC}{\COMPLEXITYCLASSNAMESTYLE{F}}
\newcommand{\IKP}{\PROBLEMNAMESTYLE{Integer\PROBLEMNAMEDELIMITER}\KNAPSACK}
\newcommand{\INCOMPLETEGAME}{\PROBLEMNAMESTYLE{Incomplete{\PROBLEMNAMEDELIMITER}Game}}
\newcommand{\KNAPSACK}{\PROBLEMNAMESTYLE{Knapsack}}
\newcommand{\MATCH}{\PROBLEMNAMESTYLE{Matching}}
\newcommand{\NP}{\COMPLEXITYCLASSNAMESTYLE{NP}}
\newcommand{\PI}[1]{\COMPLEXITYCLASSNAMESTYLE{\Pi}_{#1}}
\newcommand{\PIP}[1]{\COMPLEXITYCLASSNAMESTYLE{\Pi}_{#1}^\CLASSP}
\newcommand{\PLS}{\COMPLEXITYCLASSNAMESTYLE{PLS}}
\newcommand{\PPAD}{\COMPLEXITYCLASSNAMESTYLE{PPAD}}
\newcommand{\SAT}{\PROBLEMNAMESTYLE{Sat}}
\newcommand{\SIGMA}[1]{\COMPLEXITYCLASSNAMESTYLE{\Sigma}_{#1}}
\newcommand{\SIGMAMM}[1]{\COMPLEXITYCLASSNAMESTYLE{\Sigma}_{#1}^\COMPLEXITYCLASSNAMESTYLE{MM}}
\newcommand{\SIGMAP}[1]{\COMPLEXITYCLASSNAMESTYLE{\Sigma}_{#1}^\CLASSP}
\newcommand{\SIGMAPISAT}[1]{\SIGMA{#1}\SAT\text{-}\PI{#1}\SAT}
\newcommand{\XIKP}[1]{\PROBLEMNAMESTYLE{#1\PROBLEMNAMEDELIMITER}\IKP}
\newcommand{\XMATCH}[1]{\PROBLEMNAMESTYLE{#1}\PROBLEMNAMEDELIMITER\MATCH}
\definecolor{gray}{rgb}{0.86,0.86,0.86}
\begin{document}

\title{The Computational Complexity \\
of the Frobenius Problem}
\author{%
Shunichi Matsubara\\
Aoyama Gakuin University\\
{\ttfamily matsubara@it.aoyama.ac.jp}
}
\date{}

\maketitle

\begin{abstract}
In this paper, as a main theorem, 
we prove that the decision version of the Frobenius problem is 
$\SIGMAP{2}$-complete under Karp reductions.
Given a finite set $A$ of coprime positive integers,
we call the greatest integer that
cannot be represented as a nonnegative integer combination of $A$
the {\itshape Frobenius number}, and we denote it as $g(A)$.
We call a problem of finding $g(A)$ for a given $A$ the {\itshape Frobenius problem};
moreover, we call a problem of determining whether $g(A) \geq k$
for a given pair $(A, k)$ the decision version of the Frobenius problem,
where $A$ is a finite set of coprime positive integers
and $k$ is a positive integer.
For the proof, 
we construct two Karp reductions.
First, we reduce
a $2$-alternating version of the $3$-dimensional matching problem,
which is known to be $\PIP{2}$-complete,
to a $2$-alternating version of the integer knapsack problem. 
Then, we reduce the variant of the integer knapsack problem
to the complement of the decision version of the Frobenius problem.
As a corollary, we obtain the main theorem.
\end{abstract}

\section{Introduction}
\label{sec:introduction}

The Frobenius problem has attracted the interest of a number of mathematicians and computer scientists
since the $19$-th century~(\cite{Sylvester1882AmJMath5},
Problem C7 in \cite{guy2003unsolved},
\cite{ramirezAlfonsin2005diophantine},
and Chapter 1 in \cite{BeckRobins2015}).
Let $A = \{a_1,\cdots,a_n\}$ be a set of coprime integers 
such that $2 \leq a_1 < \cdots < a_n$,
where $n \geq 2$.
We call the greatest integer
that cannot be represented as a nonnegative integer combination of $A$
the {\itshape Frobenius number} of $A$, 
and we denote it as $g(A)$.
For example, given $\{4,6,7\}$,
the Frobenius number $g(\{4,6,7\})$ is $9$.
Generally,
a function problem that asks for the Frobenius number 
for a given finite set of coprime positive integers is called the 
{\itshape Frobenius problem}~\cite{ramirezAlfonsin2005diophantine}.
In this paper,
we denote this original version of the problem as $\FFROB$.
We denote the decision version of the Frobenius problem as $\FROB$,
which determines whether $g(A) \geq k$ for a given finite set $A$ 
of coprime positive integers and a positive integer $k$.
Moreover, we denote the complement problem of $\FROB$ as $\COFROB$.

\subsection{Results of This Work}
\label{subsec:1_1}

In this paper, 
we prove that $\FROB$ is $\SIGMAP{2}$-complete under Karp reductions
as a main theorem.
This result provides the first nontrivial upper bound 
and an improved lower bound for the computational complexity of $\FFROB$.
$\FFROB$ has been proven to be $\NP$-hard
under Cook reductions~\cite{ramirezAlfonsin1996ComplexityOfFP}.
However, to the best of the author's knowledge,
little other research has been conducted 
on any variant of the Frobenius problem
from a complexity theoretical perspective.
$\SIGMAP{2}$ is the complexity class at the second level of
the polynomial hierarchy~\cite{MeyerStockmeyer1972}.
Every problem in $\SIGMAP{2}$ can be computed in 
nondeterministic polynomial time by using
an $\NP$ oracle.
$\PIP{2}$ is the class of the complements of problems in $\SIGMAP{2}$.
Ram\'irez-Alfons\'in proposed an open question of whether
$\FROB$ is $\NP$-complete under Karp reductions
in his monograph~(Section A.1 in \cite{ramirezAlfonsin2005diophantine}).
This work is also an answer for that open question.

We prove the $\SIGMAP{2}$-completeness of $\FROB$ as follows.
First, we construct a Karp reduction
from $\PI{2}\XMATCH{3D}$ to $\PI{2}\AIKP$.
$\PI{2}\XMATCH{3D}$ is a $2$-alternating variant of the $3$-dimensional matching
problem~(Section A3.2 in \cite{garey1979computers}).
$\PI{2}\XMATCH{3D}$ is known to be $\PIP{2}$-complete due to \cite{1056978}.
$\PI{2}\AIKP$ is a $2$-alternating version 
of the integer knapsack problem~(Section A6 in
\cite{garey1979computers} and Section 15.7 in \cite{Papadimitriou:1982:COA:31027}).
$\PI{2}\AIKP$ is introduced in this paper.
We define this problem by associating with $\FROB$.
Then, we prove the membership of $\COFROB$ to $\PIP{2}$
and the $\PIP{2}$-hardness of $\COFROB$.
The $\PIP{2}$-hardness is proven 
by constructing a Karp reduction from $\PI{2}\AIKP$ to $\COFROB$.
As a corollary, we obtain the $\SIGMAP{2}$-completeness of $\FROB$.
This means that $\SIGMAP{2}$ is a lower bound for the complexity class of $\FFROB$.

Moreover, as a corollary of the $\SIGMAP{2}$-completeness of $\FROB$,
we show that $\FUNC\DELTAP{3}$ is an upper bound for $\FFROB$.
Then, we demonstrate that the $\SIGMAP{2}$-hardnesses of $\FROB$ and $\FFROB$ are weak
in the sense that there are pseudopolynomial algorithms.

\subsection{Related Work}
\label{subsec:1_2}

\subsubsection{Fast Algorithms for Solving the Frobenius Problem}
\label{subsubsec:1_2_1}

Prior to this work,
the computational difficulty of the Frobenius problem
was recognized based on the result of \cite{ramirezAlfonsin1996ComplexityOfFP}
from a theoretical perspective.
However, 
many practically fast algorithms have been actively developed~(Chapter 1 in \cite{ramirezAlfonsin2005diophantine}).
Nijenhuis~\cite{nijenhuis1979} developed a practically fast algorithm
for any instance.
He provided a characterization of the Frobenius number
by a weighted directed graph.
His algorithm is a variant of Dijkstra's algorithm for 
the single-source shortest path problem over the graph.
B\"{o}cker and Lipt\'ak~\cite{BockerSebastiAndLiptakZsuzsanna2007}
developed a practically fast algorithm
and applied it to solve a problem in bioinformatics.
Einstein, Lichtblau, Strzebonski, and Wagon~\cite{einstein2007frobenius}
developed some algorithms through the use of some mathematical
programming techniques.
For a given set $A$ of input integers,
if the number of elements of $A$ is of logarithmic order of the smallest element of $A$,
then that algorithm can run very fast.

Beihoffer, Hendry, Nijenhuis, and Wagon~\cite{beihoffer2005faster}
developed some algorithms 
by extending the algorithm of Nijenhuis~\cite{nijenhuis1979}.
Practically, their algorithms are considered to be the best algorithms under no restriction for inputs.
Roune~\cite{Roune20081} implemented an algorithm using Gr\"obner bases.
Using this algorithm, he computed the Frobenius numbers 
for inputs of thousands digits.
However, this algorithm is only practical 
if the number of a given set of positive integers is sufficiently small.

\subsubsection{Computation of the Frobenius Problem for Inputs of a Fixed Number of Integers}
\label{subsubsec:1_2_2}

If we assume that the number $n$ of input positive integers 
is fixed, then there are polynomial-time algorithms.
In the case where $n = 2$,
for any coprime positive integers $a_1$ and $a_2$,  
the Frobenius number can be calculated using
the formula $a_1 a_2 - a_1 - a_2$,
whose discoverer is unknown.
In the case where $n = 3$,
polynomial-time algorithms 
are known, e.g., the algorithm 
proposed by Davison~\cite{Davison1994353}.
In the case of any $n \geq 2$,
polynomial-time algorithms were found
by Kannan~\cite{Kannan1992} and Barvinok and Woods~\cite{barvinok2003short}.

\subsubsection{Upper and Lower Bounds for the Frobenius Number}
\label{subsubsec:1_2_3}

Although $\FROB$ cannot be efficiently computed 
unless $\CLASSP=\NP$,
some upper bounds are known for the Frobenius number.
Let $A = \{a_1,\cdots,a_n\}$ be 
a set of coprime integers
such that $2 \leq a_1 < \cdots < a_n$,
where $n \geq 2$.
For example, the following 
general upper bounds are known.
A simple upper bound $a_n^2$ was found by 
Wilf~\cite{Wilf1978}.
Another upper bound
$2 a_n \lfloor a_1/n \rfloor - a_1$
was found by Erd\"{o}s and Graham~\cite{erdos1972linear}.
The upper bound found by Krawczyk and Paz~\cite{Krawczyk1989289} 
is attractive because this bound has the same order of magnitude as the Frobenius number
and can be computed in polynomial time
under the assumption that $n$ is fixed.
Generally, 
we cannot know whether a bound is superior to another
since it depends on a given set of positive integers. 
Some lower bounds are also known.
For example,
Davison~\cite{Davison1994353} found a sharp lower bound
$\sqrt{3 a_1 a_2 a_3} - a_1 - a_2 - a_3$
of the Frobenius number
in the case where $n = 3$.
Aliev and Gruber~\cite{Aliev200771} found that
$\left(\left(n-1\right)!\Pi_{i=1}^n a_i\right)^{1/(n-1)} -
\sum_{i=1}^n a_i$. 

\subsubsection{Computational Complexity}
\label{subsubsec:1_2_4}

The complexity classes $\SIGMAP{2}$ and $\PIP{2}$ 
have been actively researched.
Stockmeyer proved 
that the problems $\SIGMA{2}\SAT$ and $\PI{2}\SAT$ 
are $\SIGMAP{2}$-complete and $\PIP{2}$-complete,
respectively~\cite{Stockmeyer19761}.
$\SIGMA{2}\SAT$ and $\PI{2}\SAT$
are extensions to the
$\SIGMAP{2}$ and $\PIP{2}$ variants
of the satisfiability problem, respectively.
McLoughlin proved that the covering radius problem
for linear codes is  $\PIP{2}$-complete~\cite{1056978}.
This problem is defined as follows.
{\itshape Given a pair $(A,w)$,
where $A$ is an $(m,n)$-matrix and $w$ is an integer,
for any $n$-vector $y$, 
is there an $m$-vector $x$ such that
$x A = y$ and 
the Hamming weight of $x$ is not greater than $w$?}
She showed the $\PIP{2}$-completeness  
of the covering radius problem
using two Karp reductions.
First, she reduced $\PI{2}\SAT$ to $\PI{2}\XMATCH{3D}$;
then, she reduced $\PI{2}\XMATCH{3D}$ to 
the covering radius problem.
Umans~\cite{Christopher2001597}
proved that the minimum equivalent DNF problem
is $\SIGMAP{2}$-complete.
This problem is defined as follows.
{\itshape Given a pair $(\varphi, k)$,
where $\varphi$ is a Boolean formula
and $k$ is an integer,
is there an equivalent formula $\psi$ to $\varphi$
with at most $k$ occurrences of literals?}
Umans showed the $\SIGMAP{2}$-completeness 
of the minimum equivalent DNF problem
using two Karp reductions.
A survey by 
Sch{\"a}fer and Umans~\cite{schaefer2002completeness}
provided a comprehensive list of
numerous problems 
at the second and third levels in the polynomial hierarchy
and their related results,
which was written in the style of \cite{garey1979computers}.

\subsubsection{Covering Radius Problem}
\label{subsubsec:1_2_5}

The Frobenius problem
is closely related to the covering radius problem 
for lattices and linear codes.
The result of McLoughlin~\cite{1056978}, described above,
is an example. 
The covering radius problem and the Frobenius problem
belong to classes at the second level of the polynomial hierarchy,
although the membership of the Frobenius problem will be shown 
in a later section of this paper.
As more general research for the complexity of the covering radius problem,
Guruswami, Micciancio, and Regev~\cite{GuruswamiMicciancioRegev2005}
investigated the approximability
of the covering radius problem and its related problems
for lattices and linear codes.
Kannan~\cite{Kannan1992} found the following relation 
for the Frobenius number and a type of covering radius for a lattice.
Given a set $A$ of coprime integers $a_1,\cdots,a_n$ such that
$2 \leq a_1 < \cdots < a_n$,
$R(P,L)$ is equal to $g(A) + \sum_{i=1}^{n} a_i$.
$P$ is a polytope such that
$(x_1,\cdots,x_{n-1}) \in P$ if and only if  
$x_1,\cdots,x_{n-1}$ are real numbers and
$\sum_{i=1}^{n-1} a_i x_i \leq 1$.
$L$ is a lattice such that
$(x_1,\cdots,x_{n-1}) \in L$ if and only if
$x_1,\cdots,x_{n-1}$ are integers and
$\sum_{i=1}^{n-1} a_i x_i$ is congruent to $0$ modulo $a_n$.
$R(P,L)$ is a covering radius of $P$ for $L$.

\subsection{Organization}

The remainder of this paper is organized as follows.
In Section~\ref{sec:preliminaries},
we define some related concepts and notations.
In Section~\ref{sec:completeness_PiAIK},
we prove the $\PIP{2}$-completeness of $\PI{2}\AIKP$.
Then, in Section~\ref{sec:completeness_FROB},
we prove $\FROB$ to be $\SIGMAP{2}$-complete as a main theorem.
In Section~\ref{sec:completeness_FFROB}, 
we discuss lower bounds and upper bounds for $\FFROB$ by using the main theorem.
Section~\ref{sec:strength} describes the weak $\SIGMAP{2}$-hardnesses
of $\FFROB$ and $\FROB$.
Finally, Section~\ref{sec:conclusion} concludes this work 
and describes open problems and future work.

\section{Preliminaries}
\label{sec:preliminaries}

\subsection{Basic Concepts and Notations}

We denote the sets of all nonnegative and positive integers
as $\mathbb{N}$ and $\mathbb{N}_+$, respectively.
For any $i, j$ in $\mathbb{N}$ with $i \leq j$,
we denote the integer interval
$\{k \in \mathbb{N} \colon i \leq k \leq j\}$ 
as $[i,j]$.

\subsection{Representations for Positive Integers}

For any $n \in \mathbb{N}$,
let $\overline{n}$ be a new symbol.
For any $N \subseteq \mathbb{N}$,
let $\overline{N}$ denote the set
$\{\overline{n} \colon n \in \mathbb{N}\}$.
Let $b$ be a nonnegative integer.
Let $n_1, \cdots, n_k$ be integers in $[0,b-1]$.
Then, we call the sequence 
$\overline{n_k} \, \cdots \overline{n_1}$
the {\itshape $k$-place $b$-representation} for integer
$\sum_{i=1}^{k} n_i b^{i-1}$.
We often denote the integer $\sum_{i=1}^{k} n_i b^{i-1}$
as $(\overline{n_k} \, \cdots \overline{n_1})_b$.
For notational convenience, 
we often denote a $k$-place $b$-representation
$\overline{n_k} \, \cdots \overline{n_1}$
as $\overline{n_k \cdots n_1}$.
For any $k$-place $b$-representation $\overline{n_k} \, \cdots \overline{n_1}$,
we call $k$ and $b$ its {\itshape length} and {\itshape base},
respectively.
Moreover, for every $i \in [1,k]$, 
we call $n_i$ its {\itshape $i$-th digit}.
We often omit {\itshape ``$k$-place''} or {\itshape ``$b$-''}.
Let $r$ be $\overline{n_k} \cdots \overline{n_1}$.
For any $i \in [1,k]$,
we denote the $i$-th digit $\overline{n_i}$ as $r[i]$.
For every $i,j \in [1,k]$ with $i \leq j$,
we call $\overline{n_i} \cdots \overline{n_j}$
a {\itshape subrepresentation} and denote it as $r[i,j]$.
For every $n \in [0,b-1]$ and $m \in \mathbb{N}$,
we define ${\overline{n}}^m$ inductively as follows.
(1) ${\overline{n}}^0 = \varepsilon$, 
(2) $\overline{n}^{m+1} = \overline{n}^{m} \overline{n}$ 
where $\varepsilon$ denotes the empty representation.

We apply some concepts on integers to their $b$-representations.
We define an {\itshape addition} of $b$-representations as follows.
Let $r_1,\cdots,r_l$ be 
$k$-place $b$-representations, 
where $b, k,l \in \mathbb{N}_+$.
Let $i$ be an integer in $[0,k]$.
Then, we define integers $d_i$ and $c_i$ 
inductively as follows.
(1) $d_0 = 0$ and $c_0 = 0$.
(2) If $i \in [1,k]$, then
$d_i$ is the floor of the quotient of $\sum_{j=1}^{l} {(r_j[i])}_b + c_{i-1}$
divided by $b$, and 
$c_i$ is the remainder of $\sum_{j=1}^{l} {(r_j[i])}_b + c_{i-1}$ 
divided by $b$. 
Then, we call $\overline{d_k} \cdots \overline{d_1}$
the {\itshape sum} of $r_1,\cdots,r_l$,
and the operation for computing the sum
is the {\itshape addition} of $r_1,\cdots,r_l$.
We call $c_i$ the {\itshape carry} at the $i$-th digit
in the addition.
We say that a carry {\itshape occurs} at the $i$-th digit
in the addition of $r_1,\cdots,r_l$
if $c_i \neq 0$.
We define ordering relations
$<, \leq,  =,  \geq,  >$
over $b$-representations as follows. 
Let $n$ and $n^\prime$ be nonnegative integers,
and let $r$ and $r^\prime$ be their $b$-representations, respectively.
Let $\circ$ be any symbol in $\{<, \leq, =, \geq, >\}$.
Then, $r \circ r^\prime$ if and only if $n \circ n^\prime$.

\subsection{Complexity Classes}
\label{subsec:complexity_classes}

In this subsection, we review some fundamental concepts 
that are closely related to this paper.
We assume that the reader is familiar with the basis of computational complexity theory.
If necessary, the reader is referred to some standard textbooks, 
e.g., \cite{arora2009computational,garey1979computers}.
We define the classes 
$\DELTAP{k}$, $\SIGMAP{k}$, and $\PIP{k}$, where $k \geq 0$, 
in the polynomial hierarchy inductively as follows.
Each of $\DELTAP{0}$, $\SIGMAP{0}$, and $\PIP{0}$ 
is the class $\CLASSP$.
For any $k \geq 1$, 
$\DELTAP{k}$, $\SIGMAP{k}$, and $\PIP{k}$ 
are the classes 
$\CLASSP^{\SIGMAP{k-1}}$, $\NP^{\SIGMAP{k-1}}$, and $\CONP^{\SIGMAP{k-1}}$,
respectively.
These definitions originate from \cite{MeyerStockmeyer1972}.
For every $k \geq 0$, we define $\DIFFP{k}$ as the class
of all of the problems $L$ such that 
$L$ is the intersection of some $L_1 \in \SIGMAP{k}$ and $L_2 \in \PIP{k}$.
This definition originates from \cite{WOOLDRIDGE200427}.
By definition,
the classes $\SIGMAP{1}$, $\PIP{1}$, and $\DIFFP{1}$ are identical to
$\NP$, $\CONP$, and $\DP$, respectively.

\subsection{Computational Problems}
\label{subsec:computational_problems_definitions}

In this subsection, we summarize the computational problems
described in this paper.
Given a problem $L$,
for every instance $I$ of $L$,
we define the size of $I$
as the bit length for representing $I$.

\begin{problem}[$\FFROB$]
\label{ex:1}
\

\noindent{\itshape Input:}
A set $A$ of coprime integers $a_1,\cdots,a_n$ 
such that $2 \leq a_1 < \cdots < a_n$ and $n \geq 2$.

\noindent{\itshape Output:}
$g(A)$.
\end{problem}

\begin{problem}[$\FROB$]
\label{prob:4}
\

\noindent{\itshape Instance:}
A pair $(A, k)$, where
$A$ is a set of coprime integers $a_1,\cdots,a_n$ 
such that $2 \leq a_1 < \cdots < a_n$ and $n \geq 2$,
and $k \in \mathbb{N}_+$.

\noindent{\itshape Question:}
$g(A) \geq k$?
\end{problem}

\begin{problem}[$\COFROB$]
\label{prob:5}
\

\noindent{\itshape Instance:}
A pair $(A, k)$, where
$A$ is a set of coprime integers $a_1,\cdots,a_n$ 
such that $2 \leq a_1 < \cdots < a_n$ and $n \geq 2$,
and $k \in \mathbb{N}_+$.

\noindent{\itshape Question:}
$g(A) < k$?
\end{problem}

\begin{problem}[$\EFROB$]

\noindent{\itshape Instance:}
A pair $(A, k)$, where
$A$ is a set of coprime integers $a_1,\cdots,a_n$ 
such that $2 \leq a_1 < \cdots < a_n$ and $n \geq 2$,
and $k \in \mathbb{N}_+$.

\noindent{\itshape Question:}
$g(A) = k$?

\end{problem}

\begin{problem}[$\FROBCOFROB$]

\noindent{\itshape Instance:}
A $4$-tuple $(A_1, k_1; A_2, k_2)$, where
$(A_1,k_1)$ and $(A_2,k_2)$ are
instances of $\FROB$ and $\COFROB$, respectively.

\noindent{\itshape Question:}
$g(A_1) \geq k_1$ and $g(A_2) < k_2$?

\end{problem}

\begin{problem}[$\PI{2}\XMATCH{3D}$]
\label{prob:1}
\

\noindent
{\itshape Instance:}
A $5$-tuple $(U_1,U_2,U_3,M_1,M_2)$, where
$U_1,U_2,U_3$ are disjoint sets such that
$|U_1| = |U_2| = |U_3| = q$ for some $q \in \mathbb{N}$,
and $M_1,M_2 \subseteq U_1 \times U_2 \times U_3$.

\noindent
{\itshape Question:}
For every $\mu_1 \subseteq M_1$,
is there $\mu_2 \subseteq M_2$ such that
$\mu_1 \cup \mu_2$ is not a matching?

\noindent{\itshape Comment:}
This problem was proven to be $\PIP{2}$-complete~\cite{1056978},
although she used the name ``AE $3$-dimensional matching''
rather than $\PI{2}\XMATCH{3D}$ in that paper.

\end{problem}

In this paper, we call a 3-dimensional matching
simply a matching if no confusion arises.
Moreover,
we define the following two total orders 
in an instance of $\PI{2}\XMATCH{3D}$,
which are specified by the subscripts.
For every $i \in [1,3]$, let
$u_{i,1},\cdots,u_{i,q}$
denote all elements of $U_i$.
We define a relation $<$ on $U_i$
as a total order such that $u_{i,1} < \cdots < u_{i,q}$.
We define a relation $<$ on $U_1 \times U_2 \times U_3$ 
as a total order such that
$(u_{1,j_1},u_{2,j_2}, u_{3,j_3})
< (u_{1,k_1},u_{2,k_2}, u_{3,k_3})$
if $(\overline{j_1} \ \overline{j_2} \ \overline{j_3})_{q+1} 
< (\overline{k_1} \ \overline{k_2} \ \overline{k_3})_{q+1}$
for every $j_1,j_2,j_3,k_1,k_2,k_3 \in [1,q]$.

\begin{problem}[$\IKP$]
\label{prob:10}
\

\noindent{\itshape Instance:}
A triple $A$, where
$A$ is a set of positive integers  $a_1,\cdots,a_n$
for some $n \in \mathbb{N}$.

\noindent{\itshape Question:}
Are there nonnegative integers $x_1,\cdots,x_n$
such that $\sum_{i=1}^n x_i a_i = k$?

\noindent{\itshape Comment:}
This problem was proven to be $\NP$-complete%
~(Section 15.7 in \cite{Papadimitriou:1982:COA:31027}).

\end{problem}

\begin{problem}[$\PI{2}\IKP$]
\label{prob:2}
\

\noindent{\itshape Instance:}
A triple $(A, \lambda, \upsilon)$, where
 $A$ is a set of positive integers
 $a_1,\cdots,a_n$
 for some $n \in \mathbb{N}$,
 and $\lambda, \upsilon \in \mathbb{N}_+$.

\noindent{\itshape Question:}
For every $k \in [\lambda, \upsilon]$, 
are there nonnegative integers $x_1,\cdots,x_n$
such that
$\sum_{i=1}^n x_i a_i = k$?
\end{problem}

Furthermore, we define a subproblem of $\PI{2}\IKP$,
which is associated with the Frobenius problem
as follows.

\begin{problem}[$\PI{2}\AIKP$]
\label{prob:3} \

\noindent{\itshape Instance:}
A pair $(A, \lambda)$,
where $(A, \lambda, \lambda+c)$ is an instance of $\PI{2}\IKP$
and $c = \min A - 1$.

\noindent{\itshape Question:}
Is $(A, \lambda, \lambda+c)$ a yes instance of $\PI{2}\IKP$? 
\end{problem}
We introduce $\PI{2}\AIKP$ 
only for the simulation of $\COFROB$.
Although the restriction of $\PI{2}\AIKP$ 
appears to be unnatural,
it suffices to argue the computational complexity of $\PI{2}\AIKP$
for proving the $\SIGMAP{2}$-completeness of $\FROB$.

\begin{problem}[$\SIGMA{k}\SAT$, $k \geq 1$]
\label{prob:6} \

\noindent{\itshape Instance:}
A CNF formula $\varphi$ over disjoint sets $X_1,\cdots,X_k$.

\noindent{\itshape Question:}
$(\exists \sigma_1 \in \{0,1\}^{|X_1|})
(\forall \sigma_2 \in \{0,1\}^{|X_1|}) \cdots (Q \sigma_k \in \{0,1\}^{|X_1|})[\varphi(\sigma_1 \cdots \sigma_k) = 1]$?
Here, $Q$ is the existential quantifier if $k$ is odd; otherwise, it
is the universal one.

\noindent{\itshape Comment:}
This problem was proven to be $\SIGMAP{2}$-complete~\cite{Wrathall197623}.

\end{problem}

\begin{problem}[$\PI{k}\SAT$, $k \geq 1$]
\label{prob:7} \

\noindent{\itshape Instance:}
A CNF formula $\varphi$ over disjoint sets $X_1,\cdots,X_k$.

\noindent{\itshape Question:}
$(\forall \sigma_1 \in \{0,1\}^{|X_1|})
(\exists \sigma_2 \in \{0,1\}^{|X_1|}) \cdots (Q \sigma_k \in \{0,1\}^{|X_1|})[\varphi(\sigma_1 \cdots \sigma_k) = 1]$?
Here, $Q$ is the universal quantifier if $k$ is odd; otherwise, it is the existential one.

\noindent{\itshape Comment:}
This problem was proven to be $\PIP{2}$-complete~\cite{Wrathall197623}.

\end{problem}

\begin{problem}[$\SIGMAPISAT{k}$, $k \geq 1$]
\label{prob:8} \

\noindent{\itshape Instance:}
A pair $(\varphi, \psi)$ of
CNF formulae over disjoint sets $X_1,\cdots,X_k$.

\noindent{\itshape Question:}
Are $\varphi$ and $\psi$ 
yes instances of $\SIGMA{k}\SAT$ and $\PIP{k}\SAT$, respectively?

\noindent{\itshape Comment:}
This problem was proven to be $\DIFFP{2}$-complete~\cite{WOOLDRIDGE200427}.

\end{problem}

\subsection{Other Measure for Analyzing Computational Complexity}
\label{subsec:strength_SIGMAP_definition}

In this subsection, we define strong $\NP$-hardness
and its related concepts~(\cite{Garey:1978:SNR:322077.322090}, Section 4.2 in \cite{garey1979computers}).
In this paper, 
we analyze the complexity of a problem
primarily by using the bit length of a given instance as only one parameter.
However, even if a problem $L$ does not have an algorithm that is polynomial time
in the bit length,
the problem can have a polynomial-time algorithm under other assumptions for its parameters.
Indeed, by measuring the complexity under a different assumption, 
we can refine many complexity classes.
This measure is also useful if we study the
approximability of computationally hard problems~(Chapter 6 in \cite{garey1979computers}, \cite{433837220000101}).

Given a problem $L$ and an instance $I$ of $L$,
if some components of $I$ are integers, 
then we call the maximum magnitude of such integers
the {\itshape unary size} of $I$.
For example,
for an instance $(A, k)$ of $\FROB$,
the unary size of $(A,k)$ is 
$\max (\{a \colon a \in A\} \cup \{k\})$.
For a problem $L$,
even if any instance $I$ of $L$ does not explicitly include
any integer component,
we can define the unary size of $I$
by considering the labels of its components to be reasonably encoded to integers.
For example, 
given an instance $\varphi$ of $\SAT$,
we define the unary size of $\varphi$
as the larger of the numbers of the clauses and variables.
If necessary,
we call the bit length of an input the {\itshape binary input} explicitly.

For any problem $L$,
we call an algorithm $A$ a {\itshape pseudopolynomial} algorithm for solving $L$
if $A$ can solve $L$ in polynomial time in the unary and binary sizes.
Let $\mathcal{C}$ be a complexity class.
Moreover, let $L$ be a $\mathcal{C}$-hard problem.
Then, we say that $L$ is {\itshape strongly} $\mathcal{C}$-hard
if there is no pseudopolynomial algorithm for solving $L$;
otherwise, we say that it is {\itshape weakly} $\mathcal{C}$-hard.

\section{$\PIP{2}$-Completeness of 
the Integer Knapsack Problem Associated with the Frobenius Problem}
\label{sec:completeness_PiAIK}

In this section, we prove
the $\PIP{2}$-completeness of $\PI{2}\AIKP$ under Karp reductions.
We construct a reduction from a $\PIP{2}$-complete problem $\PI{2}\XMATCH{3D}$ to $\PI{2}\AIKP$.
We first describe the key concepts of the reduction 
by using examples in Subsection~\ref{subsec:ideas_reduction_Pi3DM}. 
Then, we prove the $\PIP{2}$-completeness in Subsection~\ref{subsec:formulation_reduction_3DM}.

\subsection{Ideas of a Reduction from $\PI{2}\XMATCH{3D}$ to $\PI{2}\AIKP$}
\label{subsec:ideas_reduction_Pi3DM}

In this subsection, we describe the key ideas of our reduction
from $\PI{2}\XMATCH{3D}$ to $\PI{2}\AIKP$ by using examples.
We observe an instance 
$P_1 = (W,X,Y,N_1,N_2)$ of $\PI{2}\XMATCH{3D}$, where
\begin{eqnarray*}
&&
W=\{w_1,w_{2},w_{3},w_{4}\}, \ \ \ 
X=\{x_1,x_2,x_3,x_4\}, \ \ \ 
Y=\{y_1,y_2,y_3,y_4\},
\\
&& N_1=\{(w_{1}, x_1, y_2), (w_{2}, x_3, y_3)\},
\\ 
&& N_2=\{(w_{1}, x_1, y_1),
(w_{2}, x_2, y_1), (w_{2}, x_2, y_2),
(w_{3}, x_2, y_1), \\
&& \hspace{12mm} (w_{3}, x_2, y_2),
(w_{3}, x_3, y_3), (w_{4}, x_4, y_4)\}.
\end{eqnarray*}
From Table~\ref{tab:1}, we find that
$P_1$ is a yes instance.
For the given $P_1$, we construct
an instance $\psi(P_1) = (\Psi(P_1), \lambda(P_1))$ of $\PI{2}\AIKP$
as follows.
Table~\ref{tab:3} illustrates
the form of representations that we construct
for every triple in $N_1 \cup N_2$.
For every triple in $W \times X \times Y$,
we construct $15$-place $5$-representations.
In Subsections%
~\ref{subsubsec:simulation_matchings},%
~\ref{subsubsec:simulation_universal_quantification},%
~and~\ref{subsubsec:construction_integers_intervals},
we will describe the details of
the representations in
the $2$-nd to $4$-th columns of Table~\ref{tab:3}.

\begin{table}[htbp]
\caption{Subsets $\nu_2$ of $N_2$ for all subsets $\nu_1$ of $N_1$ such that $\nu_1 \cup \nu_2$ is a matching}
\label{tab:1}
\begin{tabular}{cc}
\hline
$\nu_1$ (subsets of $N_1$)
& $\nu_2$ (subsets of $N_2$)
\\
\hline
$\emptyset$
& $\{(w_1, x_1, y_1), (w_2, x_2, y_2),
(w_3, x_3, y_3), (w_4, x_4, y_4)\}$
\\
$\{(w_1, x_1, y_2)\}$
& $\{(w_2, x_2, y_1), (w_3, x_3, y_3), (w_4, x_4, y_4)\}$
\\
$\{(w_2, x_3, y_3)\}$
& $\{(w_1, x_1, y_1),(w_3, x_2, y_2), (w_4, x_4, y_4)\}$
\\
$\{(w_{1}, x_1, y_2), (w_{2}, x_3, y_3)\}$
& $\{(w_3, x_2, y_1), (w_4, x_4, y_4)\}$
\\
\hline
\end{tabular}
\end{table}

\begin{table}[htbp]
\caption{Constructed representations,
where the shaded rows correspond to
triples in $N_1$ and the others in $N_2$ and
$\mathsf{i,j,k} \in [0,4]$ and $\mathsf{h} \in [1,4]$}
\label{tab:3}
\begin{tabular}{cp{7em}p{7em}p{7em}}
\hline
Triples 
& $\mathrm{Addition}$ $\mathrm{limiting}$  $\mathrm{representations}$
& $\mathrm{Universal}$ $\mathrm{quantification}$ $\mathrm{testing}$ $\mathrm{representations}$
& $\mathrm{Match}$ $\mathrm{testing}$ $\mathrm{representations}$
\\
\hline
$(w_1,x_1,y_1)$
& $\overline{1}$
& $\overline{0} \ \overline{0}$ {\Large $\phantom{1}$}
& $\overline{000\mathsf{i}} \ \overline{000\mathsf{j}} \ \overline{000\mathsf{k}}$
\\ 
\rowcolor{gray} 
$(w_1,x_1,y_2)$
& $\overline{1}$ 
& $\overline{0} \ \overline{\mathsf{h}}$ {\Large $\phantom{1}$}
& $\overline{000\mathsf{i}} \ \overline{000\mathsf{j}} \ \overline{00\mathsf{k}0}$
\\
$(w_2,x_2,y_1)$
& $\overline{1}$ 
& $\overline{0} \ \overline{0}$ {\Large $\phantom{1}$}
& $\overline{00\mathsf{i}0} \ \overline{00\mathsf{j}0} \ \overline{000\mathsf{k}}$
\\
$(w_2,x_2,y_2)$
& $\overline{1}$ 
& $\overline{0} \ \overline{0}$ {\Large $\phantom{1}$}
& $\overline{00\mathsf{i}0} \ \overline{00\mathsf{j}0} \ \overline{00\mathsf{k}0}$
\\
\rowcolor{gray}  
$(w_2,x_3,y_3)$
& $\overline{1}$ 
& $\overline{\mathsf{h}} \ \overline{0}$ {\Large $\phantom{1}$}
& $\overline{00\mathsf{i}0} \ \overline{0\mathsf{j}00} \ \overline{0\mathsf{k}00}$
\\
$(w_{3},x_2,y_1)$
& $\overline{1}$ 
& $\overline{0} \ \overline{0}$ {\Large $\phantom{1}$}
& $\overline{0\mathsf{i}00} \ \overline{00\mathsf{j}0} \ \overline{000\mathsf{k}}$ 
\\
$(w_{3},x_2,y_2)$
& $\overline{1}$ 
& $\overline{0} \ \overline{0}$ {\Large $\phantom{1}$}
& $\overline{0\mathsf{i}00} \ \overline{00\mathsf{j}0} \ \overline{00\mathsf{k}0}$ 
\\
$(w_{3},x_3,y_3)$
& $\overline{1}$ 
& $\overline{0} \ \overline{0}$ {\Large $\phantom{1}$}
& $\overline{0\mathsf{i}00} \ \overline{0\mathsf{j}00} \ \overline{0\mathsf{k}00}$ 
\\
$(w_{4},x_4,y_4)$
& $\overline{1}$ 
& $\overline{0} \ \overline{0}$ {\Large $\phantom{1}$}
& $\overline{\mathsf{i}000} \ \overline{\mathsf{j}000} \ \overline{\mathsf{k}000}$ 
\\
\hline
\end{tabular}
\end{table}

\subsubsection{Simulation of Matchings}
\label{subsubsec:simulation_matchings}

In this subsection, we explain the details of the $4$th column
of Table~\ref{tab:3}.
To simulate matchings in $P_1$,
for every triple in $W \times X \times Y$,
we use $12$-place $5$-representations,
which we call {\itshape match testing representations}.
The base $5$ of a match testing representation
is equal to $q+1$, where $q$ is
the cardinality of each set of $W$, $X$, and $Y$.
This equality is for keeping the consistency in a simulation 
of a union operation in $\PI{2}\XMATCH{3D}$
by using additions of integers in $\PI{2}\AIKP$.
We describe the details below.
The $4$-th column of Table~\ref{tab:3} illustrates
the form of the match testing representations 
for every triple in $W \times X \times Y$.
A match testing representation consists of
three distinct parts: one each for $W$, $X$, and $Y$.
Each part consists of $4$ digits.
Each digit in each part corresponds to 
a variable in the corresponding set.
Let $r$ be the representation
$\overline{0\mathsf{i}00} \ \overline{00\mathsf{j}0} \
\overline{00\mathsf{k}0}$,
i.e.,
the one in 
the $4$th column and the $7$th row in Table~\ref{tab:3}.
The subrepresentation $r[9,12] = \overline{0\mathsf{i}00}$
means that $w_3$ is the third element of $W$;
$r[5,8] = \overline{00\mathsf{j}0}$
means that $x_2$ is the second of $X$;
and $r[1,4] = \overline{00\mathsf{k}0}$
means that $y_2$ is the second of $Y$.

By the above description,
we can easily observe that
if a subset $\nu$ of $W \times X \times Y$ is a matching, then
we can find a match testing representation $r(t)$ for every $t \in \nu$
such that $\sum_{t \in \nu} r(t)$ is in $\overline{[0,4]}^{12}$.
However,
the converse does not necessarily hold due to carries in additions.
For example, 
$\overline{1111} \ \overline{1111} \ \overline{1111}$
can be written as the sum of
$4$ match testing representations
\begin{center}
 $\overline{0002} \ \overline{0002} \ \overline{0002}$, \ \ \ \ 
 $\overline{0004} \ \overline{0004} \ \overline{0004}$, \ \ \ \
 $\overline{0100} \ \overline{0100} \ \overline{0100}$, \ \ \ \  
$\overline{1000} \ \overline{1000} \ \overline{1000}$
\end{center}
although the set $\{(w_1,x_1,y_1), (w_3,x_3,y_3), (w_4,x_4,y_4)\}$
of the corresponding triples is not a matching.
Next, let us consider the restriction that
the number of added representations is at most $4$.
Under this restriction,
$\overline{4444} \ \overline{4444} \ \overline{4444}$
can be written as the sum of match testing representations
only if the corresponding triples is a matching.
This property is due to the fact
that the base is greater than the number of added representations.
In addition, by construction,
if $\overline{4444} \ \overline{4444} \ \overline{4444}$
can be written as the sum of $4$ match testing representations,
then any representation in $\overline{[0,4]}^{12}$
can be also written as the sum of $4$ match testing representations.
Consequently, 
a subset of $W \times X \times Y$
is a matching if and only if
every integer in $[0,(5^{12}-1)]$
can be  written as the sum of integers
that correspond to match testing representations.

\subsubsection{Simulation of Universal Quantification}
\label{subsubsec:simulation_universal_quantification}

In this subsection, we explain the details of the $3$rd column
of Table~\ref{tab:3}.
By using the representations in this column,
we check whether, in $\PI{2}\AIKP$, every subset of $N_1$ 
forms a matching together with a subset of $N_2$.
To simulate this check, we use $2$-place $5$-representations,
which we call {\itshape universal quantification testing representations},
for every triple in $N_1$.
In a universal quantification testing representation,
the $i$-th digit corresponds to the $i$-th element of $N_1$.
In particular, if $\mathsf{k}$ is in $[1,4]$,
then $\overline{0 \mathsf{k}}$ corresponds to
$(w_1,x_1,y_2)$, which is the smallest element of $N_1$,
and $\overline{\mathsf{k} 0}$ corresponds to
$(w_2,x_3,y_3)$, which is the $2$nd smallest element of $N_1$.
We can observe the relationship
between subsets of $N_1$ and 
sums of universal quantification testing
representations in Table~\ref{tab:4}.

\begin{table}[htbp]
\caption{Sums of the universal quantification testing
representations for subsets of $N_1$.}
\label{tab:4}
\begin{tabular}{p{11em}p{11em}p{12em}}
\hline
Subsets of $N_1$
& Sums of the corresponding universal quantification testing representations
& Corresponding integers 
\\ \hline
$\emptyset$
& $\overline{00}$
& $0$ {\Large $\phantom{1}$}
\\
$\{(w_1, x_1, y_2)\}$
& $\overline{0\mathsf{k}}$ \ \ ($\mathsf{k} \in [1,4]$)
& $(\overline{0\mathsf{k}})_5 \in [1,4]$ {\Large $\phantom{1}$}
\\
$\{(w_2, x_3, y_3)\}$
& $\overline{\mathsf{k}0}$ \ \ ($\mathsf{k} \in [1,4]$)
& $(\overline{\mathsf{k}0})_5 \in \{5, 10, 15, 20\}$ {\Large $\phantom{1}$}
\\
$\{(w_{1}, x_1, y_2), (w_{2}, x_3, y_3)\}$
& $\overline{\mathsf{j}\mathsf{k}}$ \ \ ($\mathsf{j}, \mathsf{k} \in [1,4]$)
& $(\overline{\mathsf{j}\mathsf{k}})_5 \in [6,24] \backslash \{10,15,20\}$
\\
\hline
\end{tabular}
\end{table}

As in the discussion for match testing representations
in Subsection~\ref{subsubsec:simulation_matchings},
we can observe the following.
For any $\nu_1 \subseteq N_1$,
let $S(\nu_1)$ denote the set
$\{5 d_2 + d_1
\colon  d_i \in [1,4] \text{ if the }
i\text{-th element of } N_1 \text{ is in } \nu_1
\text{; and } d_i = 0
\text{ otherwise for each } i
\text{ of } 1 \text{ and } 2\}$.
Then,
$\nu_1 \subseteq W \times X \times Y$ is a subset of $N_1$
if and only if every integer in $S(\nu_1)$
can be written as the sum of integers
corresponding to the universal quantification testing representations.
Notably,
$\bigcup_{\mu \subseteq N_1} S(\mu)$
is an interval
although $S(\nu_1)$ may not be an interval
for any $\nu_1 \subseteq N_1$.
We can observe that the set of integers
at all the cells in the $3$rd column of Table~\ref{tab:4}
is the interval $[0,24]$.

Moreover, 
we define the universal quantification testing representations for every
triple of $N_2$ as $\overline{00}$.

\subsubsection{Construction of Integers and Intervals}
\label{subsubsec:construction_integers_intervals}

In this subsection, we explain the details of the $2$nd column of Table~\ref{tab:3}.
For simulating an instance of $\PI{2}\XMATCH{3D}$,
we construct $15$-place $5$-representations such that
the following condition is satisfied.
If $R$ is a set of $15$-place $5$-representations that corresponds to a matching,
then the sum of all elements of $R$ is in 
\begin{center}
$[\overline{4} \ \overline{00} \ \overline{0000}
\ \overline{0000} \ \overline{0000} \ , \
\overline{4} \ \overline{44} \ \overline{4444}
\ \overline{4444} \ \overline{4444}]$.
\end{center}
For this purpose,
for every $15$-place $5$-representation,
we use a single digit $\overline{1}$, which we call
its {\itshape addition limiting representation}.
We place it as the most significant digit
of every $15$-place $5$-representation.
Then,
we define $\Psi(P_1)$ as the set of all $15$-place $5$-representations 
$\overline{1} r_\mathrm{u}(t) r_\mathrm{m}(t)$,
where
$t \in W \times X \times Y$
and $r_\mathrm{u}(t)$ and $r_\mathrm{m}(t)$
are universal quantification and match testing representations for
$t$, respectively.
In addition, we define $\lambda(P_1)$ as
$(\overline{4} \ \overline{00} \ \overline{0000}
\ \overline{0000} \ \overline{0000})_5$,
i.e., $4 \cdot 5^{14}$.
Moreover, the smallest element of $\Psi(P_1)$ is
$(\overline{1} \ \overline{00} \ \overline{0000}
\ \overline{0000} \ \overline{0000})_5$,
i.e., $5^{14}$.

Given a set $\tau$ of triples in $W \times X \times Y$,
for verifying whether $\tau$ is a matching,
it suffices to check whether the sum of all elements in $\Psi(\tau)$
is in $[4 \cdot 5^{14},(5^{15}-1)]$.
Table~\ref{tab:5} illustrates a set of integers constructed
from a matching in $P_1$.
We can observe the following.
For a subset $\{(w_2,x_3,y_3)\}$ of $N_1$,
there is a subset 
\begin{center}
$\{(w_1,x_1,y_1), \ \ (w_3,x_2,y_2), \ \ (w_4,x_4,y_4)\}$ 
\end{center}
of $N_2$ such that
the union is a matching.
The sum of the $4$ integers in the $1$st column is
$(\overline{4} \ \overline{40} \ \overline{4444} \ \overline{4444} \
\overline{4444})_5$.
This sum is in $[4 \cdot 5^{14},(5^{15}-1)]$.

\begin{table}[htbp]
\caption{Example of a tuple of constructed integers whose sum
corresponds to a matching} 
\label{tab:5}
\begin{tabular}{cc}
\hline
Constructed integers & Corresponding triples
\\
\rowcolor{gray}  
$(\overline{1}$ \ $\overline{40}$ \ $\overline{0040}$ \ $\overline{0040}$ 
\ $\overline{0400})_5$ {\Large $\phantom{1}$}
& $(w_2,x_3,y_3) \in N_1$ 
\\
$(\overline{1}$ \ $\overline{00}$ \ $\overline{0004}$ \ $\overline{0004}$ 
\ $\overline{0004})_5$ {\Large $\phantom{1}$}
& $(w_1,x_1,y_1) \in N_2$ 
\\
$(\overline{1}$ \ $\overline{00}$ \ $\overline{0400}$ \ $\overline{0040}$ 
\ $\overline{0040})_5$ {\Large $\phantom{1}$}
& $(w_3,x_2,y_2) \in N_2$
\\
$(\overline{1}$ \ $\overline{00}$ \ $\overline{4000}$ \ $\overline{4000}$ 
\ $\overline{4000})_5$ {\Large $\phantom{1}$}
& $(w_4,x_4,y_4) \in N_2$
\\
\hline
\end{tabular}
\end{table}

\subsection{The $\PIP{2}$-Completeness of $\PI{2}\AIKP$}
\label{subsec:formulation_reduction_3DM}

In this subsection,
we prove $\PI{2}\AIKP$ to be $\PIP{2}$-complete under Karp reductions.
As the hardness part of the proof,
we reduce $\PI{2}\XMATCH{3D}$ to $\PI{2}\AIKP$
as informally described in Subsection~\ref{subsec:ideas_reduction_Pi3DM}.
$\PI{2}\XMATCH{3D}$ is known to be $\PIP{2}$-complete under Karp reductions~\cite{1056978}.

\begin{theorem}
\label{thm:1}
$\PI{2}\AIKP$ is $\PIP{2}$-complete under Karp reductions.
\end{theorem}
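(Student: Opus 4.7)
My plan is to prove Theorem~\ref{thm:1} in two parts: membership $\PI{2}\AIKP \in \PIP{2}$, and $\PIP{2}$-hardness via a Karp reduction from $\PI{2}\XMATCH{3D}$. For membership, I would observe that on input $(A,\lambda)$ with $A = \{a_1,\ldots,a_n\}$ and $c = \min A - 1$, the problem asks the $\forall\exists$ question: for every $k \in [\lambda, \lambda+c]$, do there exist $x_1,\ldots,x_n \in \mathbb{N}$ with $\sum_i x_i a_i = k$? Both the universally quantified $k$ and any satisfying witness (restricted by $x_i \leq k/a_i$) have polynomial bit-length, and the verification $\sum_i x_i a_i = k$ lies in $\CLASSP$, so $\PI{2}\AIKP \in \PIP{2}$.

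For hardness I would formalize the construction sketched in Subsection~\ref{subsec:ideas_reduction_Pi3DM}. Given an instance $(U_1,U_2,U_3,M_1,M_2)$ with $|U_i| = q$ and $m = |M_1|$, I set $b = q+1$ and total length $L = 3q+m+1$, and for every $t \in M_1 \cup M_2$ construct a single $L$-place $b$-representation $a(t) = \overline{1}\,r_\mathrm{u}(t)\,r_\mathrm{m}(t)$: the leading digit $\overline{1}$ is the addition-limiting part; $r_\mathrm{u}(t)$ is the $m$-digit universal-quantification-testing block carrying a single non-zero digit at position $i$ exactly when $t$ is the $i$-th element of $M_1$, and zero everywhere when $t \in M_2 \setminus M_1$; and $r_\mathrm{m}(t)$ is the $3q$-digit match-testing block composed of three $q$-digit sub-blocks, each with a single non-zero digit at the position indexing the corresponding component of $t$. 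I take $A$ to be the set of these $a(t)$'s (augmented, if needed, by a sentinel element of value $b^{L-1}$ that sets $\min A = b^{L-1}$), and $\lambda = q\,b^{L-1}$; with $c = \min A - 1$ the interval $[\lambda, \lambda+c]$ is then calibrated to be exactly the set of $L$-place $b$-representations with leading digit $\overline{q}$. The construction is manifestly polynomial-time.

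The main obstacle will be the correctness argument, whose core is a carry-free digit analysis combined with an enumeration claim on the interval. Since every $a(t)$ has leading digit $\overline{1}$ and each $k$ in the target interval has leading digit $\overline{q}$, the fact that $b = q+1$ strictly exceeds $q$ prevents any carry from the lower $L-1$ digits from reaching the top digit and forces the effective total multiplicity of summands used in any decomposition $k = \sum_j x_j a(t_j)$ to be exactly $q$; this bound in turn keeps every column of the sum below $b$, so digit-wise addition proceeds without any carries anywhere. Under this carry-freeness, the universal-quantification digits of $k$ read off column by column the multiplicities of the chosen triples from $M_1$ used in the decomposition, and the match-testing digits of $k$ read off, for each element $u \in U_1 \cup U_2 \cup U_3$, the sum of multiplicities of all selected triples containing $u$. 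I would then argue that as $k$ sweeps $[\lambda, \lambda+c]$, its universal-quantification block enumerates all indicator patterns of subsets $\mu_1 \subseteq M_1$, and for each such $\mu_1$ a valid decomposition of $k$ exists precisely when one can choose $\mu_2 \subseteq M_2$ such that $\mu_1 \cup \mu_2$ realizes a matching in the sense illustrated by Table~\ref{tab:5}. The delicate part will be verifying both directions of this correspondence simultaneously across the whole interval, repeatedly exploiting $b > q$ to rule out ambiguity from carries; together with the polynomial-time construction of $(A,\lambda)$ from $(U_1,U_2,U_3,M_1,M_2)$, this will deliver the desired Karp reduction.
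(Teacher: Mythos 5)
Your membership argument is fine, and your hardness reduction follows the same architecture as the paper's (base $b=q+1$, a leading addition-limiting digit, a universal-quantification block of length $|M_1|$, a match-testing block of length $3q$, and $\lambda = q\,b^{L-1}$ with the interval calibrated to all representations with leading digit $\overline{q}$). But there is a genuine gap in your construction: you build \emph{one} integer $a(t)$ per triple, with a fixed non-zero digit in each relevant position, and you plan to realize a target $k$ by taking $a(t)$ with multiplicities. This breaks the ``only if'' (completeness) direction. The target $k$ sweeps over \emph{all} integers with leading digit $\overline{q}$, so its match-testing block takes every value in $[0,q]^{3q}$. In your scheme, as you yourself note, the digit of the sum at the position of an element $u$ equals the total multiplicity of selected triples containing $u$; since the three elements of any one triple always receive the same contribution from that triple, a perfect matching can only produce digit patterns that are constant on the three positions of each matching triple (and whose total digit sum is exactly $3q$, forced by the leading digit fixing the total multiplicity at $q$). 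A $k$ whose match-testing block is, say, $\overline{q}$ at one position of the $U_1$ sub-block and $\overline{0}$ elsewhere lies in the interval but is unrepresentable, so yes instances of $\PI{2}\XMATCH{3D}$ would map to no instances of $\PI{2}\AIKP$.

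The paper avoids this by associating with each triple $t$ a whole \emph{family} of integers $\Gamma(t)$: for $t=(u_{1,j_1},u_{2,j_2},u_{3,j_3})$ it includes every representation with independent digits $d_1,d_2,d_3 \in [0,q]$ at the three component positions (and, for $t \in M_1$, every $d_0 \in [1,q]$ at its universal position). Then each triple of a matching is used with multiplicity one, but with the representative whose digits match those of the target $k$ column by column; this is what makes every $k$ in the interval reachable while keeping the set polynomial-sized (roughly $q^{3}$, resp.\ $q^{4}$, integers per triple). To repair your proof you would need to replace your single $a(t)$ by such a digit-loaded family; the rest of your carry-free analysis (no carry can reach the top digit since $b>q$, hence exactly $q$ summands, hence no carries anywhere) then goes through essentially as in the paper.
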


\begin{proof}
We first prove $\PI{2}\AIKP$ to be in $\PIP{2}$.
More generally,
we show that $\PI{2}\IKP$ is in $\PIP{2}$.
Let $(A,\lambda,\upsilon)$ be an instance of $\PI{2}\IKP$.
By definition, every integer in the interval $[\lambda,\upsilon]$
can be represented as a binary representation of polynomial length in the input.
Moreover, for every integer $k$ in $[\lambda,\upsilon]$,
we can check whether there are nonnegative integers $x_1,\cdots,x_n$ such that
$\sum_{i=1}^n x_i a_i = k$, where $\{a_1,\cdots,a_n\} = A$,
in polynomial time
by using $\IKP$~(Section 15.7 in \cite{Papadimitriou:1982:COA:31027}) as an oracle.
$\IKP$ is known to be an $\NP$-complete problem%
~(Section 15.7 in \cite{Papadimitriou:1982:COA:31027}).
Thus, $\PI{2}\IKP$ is in $\PIP{2}$.
 
Then, 
we will prove $\PI{2} \AIKP$ to be $\PIP{2}$-hard under Karp reductions.
For this purpose, 
we reduce $\PI{2}\XMATCH{3D}$ to $\PI{2}\AIKP$.
In particular, 
we formulate the mapping $\psi$ as described in Subsection~\ref{subsec:ideas_reduction_Pi3DM}.
Let $P = (U_1,U_2,U_3,M_1,M_2)$
be fixed to an instance of $\PI{2}\XMATCH{3D}$.
Let $q$ be the cardinality of $U_1$,
i.e., $|U_1|=|U_2|=|U_3|=q$.
Then, we will define an instance $\psi(P) = (\Psi(P), \lambda(P))$
of $\PI{2}\AIKP$,
where $\Psi(P)$ and $\lambda(P)$ are defined as follows.

For every $1 \leq i \leq 3$,
let $u_{i,1}, \cdots, u_{i,q}$ be all elements of $U_i$.
Let $b$ be integer $q+1$.
The integer $b$ is the basis of all representations that we construct.
Let $t = (u_{1,j_1}, u_{2,j_2}, u_{3,j_3})$
be a triple in $U_1 \times U_2 \times U_3$,
where $j_1,j_2,j_3 \in [1,q]$.
Then, we define $R_\mathrm{m}(t)$ as the set of all $b$-representations 
of the form 
\begin{center}
 $\underbrace{\overline{0} \ \ \cdots \ \ \overline{0}}_{(q-j_1) \text{ digits}} \
 \overline{d}_1 \
 \underbrace{\overline{0} \ \ \cdots \ \  \overline{0}}_{(j_1-1) \text{ digits}} \ 
 \underbrace{\overline{0} \ \ \cdots \ \ \overline{0}}_{(q-j_2) \text{ digits}} \
 \overline{d}_2 \
 \underbrace{\overline{0} \ \ \cdots \ \ \overline{0}}_{(j_2-1) \text{ digits}} \
 \underbrace{\overline{0} \ \ \cdots \ \ \overline{0}}_{(q-j_3) \text{ digits}} \
 \overline{d}_3 \
 \underbrace{\overline{0} \ \ \cdots \ \ \overline{0}}_{(j_3-1) \text{ digits}}$, 
\end{center}
where $d_1,d_2,d_3 \in [0,q]$.
Let $t_1, \cdots, t_{|M_1|}$ be all elements of $M_1$,
where $t_1 < \cdots < t_{|M_1|}$.
For every $1 \leq k \leq |M_1|$,
we define $R_\mathrm{u}(t_k)$ as the set of
all $b$-representations of the form
\begin{center}
$\underbrace{\overline{0} \ \ \cdots \ \ \overline{0}}_{(|M_1|-k) \text{ digits}} \
\overline{d} \
\underbrace{\overline{0} \ \ \cdots \ \ \overline{0}}_{(k-1) \text{ digits}}$,
\end{center}
where $d \in [1,q]$.

For every $t \in (U_1 \times U_2 \times U_3) \backslash M_1$,
we define $R_\mathrm{u}(t)$ as the set $\{\overline{0}^{|M_1|}\}$.
For any $s \in U_1 \times U_2 \times U_3$,
we define $\Gamma(s)$ as the set
$\{\overline{1} \ \alpha \ \beta \colon
\alpha \in R_\mathrm{u}(s), \beta \in R_\mathrm{m}(s)\}$.
Let $r$ be a representation in
$\bigcup_{s \in U_1 \times U_2 \times U_3} \Gamma(s)$.
Then,
we call $r[1,3q]$, $r[3q+1,3q+|M_1|]$, and $r[3q+|M_1|+1]$
the {\itshape match testing}, {\itshape universal quantification testing}, and
{\itshape addition limiting} representations
for $r$, respectively.
For any $s \in U_1 \times U_2 \times U_3$,
we define 
$\Psi(s)$ as the set $\{(\alpha)_b  \colon \alpha \in \Gamma(s)\}$.
For any $S \subseteq U_1 \times U_2 \times U_3$,
we define 
$\Psi(S)$ as the set $\bigcup_{s \in S} \Psi(s)$.
We define $\Psi(P)$ as the set $\Psi(M_1) \cup \Psi(M_2)$.
We define $\lambda(P)$ as 
integer $(\overline{q} \ \overline{0}^{3q+|M_1|})_b$.

\begin{algorithm}
\SetAlgoNoLine
\KwIn{$P = (U_1,U_2,U_3,M_1,M_2)$.}
\KwOut{$\psi(P)$.}
\For {each set $S$ of $M_1$, $M_2$, $U_1$, $U_2$, and $U_3$}{
Sort all the elements of $S$ in ascending order}
$A \longleftarrow \emptyset$\;
\For{every triple $t=(u_1,u_2,u_3) \in M_1 \cup M_2$}{
\For{every $i_1,i_2,i_3 \in [1,q]$}{
\For{every $d_1$, $d_2, d_3 \in [0, q]$}{ 
\uIf{$t \in M_1$}{ 
$i_0 \longleftarrow $ (an integer such that $t$ is the $i_0$-th smallest element of $M_1$)\;
\For{every $d_0 \in [1, q]$}{
 $c \longleftarrow 
 (\overline{1} \ 
 \overline{0}^{|M_1|-i_0} \ \overline{d_0} \ 
 \overline{0}^{i_0 - i_1 + q - 1} \ \overline{d_1} \
 \overline{0}^{i_1 - i_2 + q - 1} \ \overline{d_2} \
 \overline{0}^{i_2 - i_3 + q - 1} \ \overline{d_3} \
\overline{0}^{i_3-1})_b$\;
$A \longleftarrow A \cup \{c\}$\;
}
}
\Else{
$c \longleftarrow 
(\overline{1} \ 
\overline{0}^{|M_1| - i_1 + q - 1} \ \overline{d_1} \
\overline{0}^{i_1 - i_2 + q - 1} \ \overline{d_2} \
\overline{0}^{i_2 - i_3 + q - 1} \ \overline{d_3} \
\overline{0}^{i_3-1})_b$\;
$A \longleftarrow A \cup \{c\}$\;
}
}
}
}
$e \longleftarrow \min A$\;
\KwRet $(A,e)$\;
\caption{Reduction from $\PI{2}\AIKP$ to $\PI{2}\XMATCH{3D}$}
\label{alg:1}
\end{algorithm}

Given $P$,
we can compute $\psi(P)$ in polynomial time as Algorithm~\ref{alg:1}.
In Algorithm~\ref{alg:1},
the innermost loop is in lines $10$-$13$.
By definition, $|M_1 \cup M_2|$ is less than or equal to $q^3$.
Thus, the loop at lines $5$-$20$ is repeated at most $q^3$ times.
In each iteration of the loop at lines $5$-$20$,
the loop at lines $6$-$19$ is repeated at most $(q-1)^3$ times.
In each iteration of the loop at lines $6$-$19$,
the loop at lines $7$-$18$ is repeated at most $q^3$ times.
In each iteration of the loop at lines $7$-$18$,
the loop at lines $10$-$13$ is repeated at most $(q-1)^3$ times.
Thus, we can compute $\psi(P)$ in time with polynomial order in $q$.

In the remainder of the proof,
we confirm the validity of the reduction.
In particular,
we prove that
$P$ is a yes instance of $\PI{2}\XMATCH{3D}$ if and only if
$\psi(P)$ is a yes instance of $\PI{2}\AIKP$.
We first show the {\itshape ``only if''} part  of the proof.
Let $\mu_1 \subseteq M_1$
and $\mu_2 \subseteq M_2$
such that $\mu_1 \cup \mu_2$ is a matching.
Let $t_1,\cdots,t_\xi$ be
all elements of $\mu_1$,
where $t_1 < \cdots < t_\xi$.
Let $t_{\xi+1},\cdots,t_q$ be
all elements of $\mu_2$,
where $t_{\xi+1} < \cdots < t_q$.
Let $r$ $=$ $\overline{d_{\alpha+1}}$ $\cdots$ $\overline{d_1}$
be a $b$-representation such that 
$\overline{q} \overline{0}^\alpha \leq r \leq \overline{q}^{\alpha+1}$,
where $\alpha = 3q+|M_1|$.
Then, it suffices to show the following statement (I).

\noindent 
(I) {\itshape  
We can find
$r_k \in \Gamma(t_k)$ for every $1 \leq k \leq q$
such that $\sum_{k=1}^q r_k = r$.}

By construction,
for every $1 \leq k \leq \xi$, we can find $r_k \in \Gamma(t_k)$ of the form
\begin{eqnarray*}
&& \overline{1} \ 
\underbrace{\overline{0} \ \ \ \ \ \cdots \ \  \ \ \ \overline{0}}_{(|M_1|-k_0) \text{ digits}} \
\overline{d_{3q+k_0}} \ 
\underbrace{\overline{0} \ \ \ \ \ \cdots \ \  \ \ \ \overline{0}}_{(q+k_0-k_1-1) \text{ digits}} \
\overline{d_{2q+k_1}} \ 
\\
&& \ \ \ \ \ \ \ \ \ \ \ \ \ \ \ \ \ \ \ 
\underbrace{\overline{0} \ \ \ \ \ \cdots \ \  \ \ \ \overline{0}}_{(q+k_1-k_2-1)  \text{ digits}} \
\overline{d_{q+k_2}} \ 
\underbrace{\overline{0} \ \ \ \ \ \cdots \ \  \ \ \ \overline{0}}_{(q+k_2-k_3-1)  \text{ digits}} \
\overline{d_{k_3}} \ 
\underbrace{\overline{0} \ \ \ \ \ \cdots \ \  \ \ \ \overline{0}}_{(k_3-1)  \text{ digits}},
\end{eqnarray*}
where $t_k$ is the $k_0$-th smallest element of $M_1$,
and for every $1 \leq i \leq 3$,
the $i$-th component of $t_k$ is the $k_i$-th smallest element of $U_i$.
Similarly, by construction,
for every $\xi+1 \leq k \leq q$, we can find $r_k \in \Gamma(t_k)$ of the form
\begin{eqnarray*}
&& \overline{1} \ 
\underbrace{\overline{0} \ \ \ \ \ \cdots \ \  \ \ \ \overline{0}}_{(q+|M_1|-k_1) \text{ digits}} \
\overline{d_{2q+k_1}} \ 
\underbrace{\overline{0} \ \ \ \ \ \cdots \ \  \ \ \ \overline{0}}_{(q+k_1-k_2-1)  \text{ digits}} \
\\
&& \ \ \ \ \ \ \ \ \ \ \ \ \ \ \ \ \ \ \ \ \ \ \ \ \ \ \ \ \ \ \ \ \ \ 
\overline{d_{q+k_2}} \ 
\underbrace{\overline{0} \ \ \ \ \ \cdots \ \  \ \ \ \overline{0}}_{(q+k_2-k_3-1)  \text{ digits}} \
\overline{d_{k_3}} \ 
\underbrace{\overline{0} \ \ \ \ \ \cdots \ \  \ \ \ \overline{0}}_{(k_3-1)  \text{ digits}},
\end{eqnarray*}
where for every $1 \leq i \leq 3$,
the $i$-th component of $t_k$ is the $k_i$-th smallest element of $U_i$.
By the assumption that $\mu_1 \cup \mu_2$ is a matching,
the following holds.
For every $i$ and $j$ with $1 \leq i < j \leq q$,
if $c_i$ and $c_j$ are the $l$-th components of 
$t_i$ and $t_j$ for some $1 \leq l \leq 3$, respectively,
then $c_i \neq c_j$
Thus, for every $i$ and $j$ with $1 \leq i < j \leq q$,
there is no $1 \leq p \leq 3q+|M_1|$ such that
$r_i[p] \neq \overline{0}$ and $r_j[p] \neq \overline{0}$.
Consequently, statement (I) holds.

Next, we show the {\itshape ``if''} part.
The proof for this part requires more careful arguments.
We prove the following statement.
Let $n_1,\cdots,n_\kappa$ denote 
integers in $\Psi(P)$,
where $n_1 \leq \cdots \leq n_\kappa$
and $\kappa \in \mathbb{N}_+$.
Let $I$ be a subset of $[3q+1,3q+|M_1|]$.

(II)
{\itshape If
$\sum_{k=1}^\kappa n_k =
\sum_{k \in [1,3q] \cup I \cup \{3q + |M_1|\}} q b^{k-1}$,
then we can find a matching
$\{t_1, \cdots, t_\kappa\}$
such that $n_i \in \Psi(t_i)$
for every $1 \leq i \leq \kappa$.}

For every $i \in [1,\kappa]$,
let $r_i$ denote the $b$-representation of $n_i$.
Let $r$ denote the $\sum_{i=1}^\kappa r_i$.
To prove statement (II), 
we show the following statements.

(III) {\itshape$\kappa = q$.}

(IV) {\itshape For every $p \in [1,3q] \cup I$,
there is exactly one $r_k$,
where $k \in [1,\kappa]$, such that
$r_k[p] \neq \overline{0}$.}

By definition,
if statement (IV) is satisfied, then
we can obtain a matching
$\{t_1, \cdots, t_\kappa\}$
from $r_1, \cdots, r_\kappa$.
Let us prove statement (III).
By definition,
the most significant digit of any representation in $\Gamma$
is $\overline{1}$.
Thus, since $r[3q+|M_1|+1] = \overline{q}$, 
$\kappa$ is less than or equal to $q$.
By definition, for every $1 \leq i \leq \kappa$,
the match testing representation of $r_i$
consists of at most $3$ non-zero digits.
Moreover, by definition,
the match testing representation of $r$
is $\overline{q}^{3q}$.
Thus, $\kappa$ is greater than or equal to $q$.
Consequently, $\kappa = q$.
Next, assume that
there is $j \in [1,3q] \cup I$ such that
$r_k[j] = \overline{0}$ for every $k \in [1,\kappa]$.
Then, at most $\kappa-1$ carries occur 
at the $(j-1)$-th digits
in additions of $n_1,\cdots,n_\kappa$.
It follows
that $r[j]$ is at most $\kappa-1$,
i.e., at most $q-1$ by statement (III).
This contradicts the assumption that
every digit of $r$ is $\overline{q}$.
Thus, statement (IV) holds.
The proof of Theorem~\ref{thm:1} is complete.
\end{proof}

\section{The $\SIGMAP{2}$-Completeness of $\FROB$ under Karp Reductions} 
\label{sec:completeness_FROB}

In this section, we prove the $\SIGMAP{2}$-completeness of $\FROB$ under Karp reductions.
This is the main theorem of this paper.
We will obtain this theorem as a corollary of a theorem that
$\COFROB$ is $\PIP{2}$-complete under Karp reductions.
In Subsection~\ref{subsec:difference_IKP_FP},
we observe differences in instances of $\PI{2}\AIKP$ and $\COFROB$.
In Subsection~\ref{subsec:ideas_reduction_FP_IKP},
we describe the key concepts of our reduction from $\PI{2}\AIKP$ and $\COFROB$.
In Subsection~\ref{subsec:formulation_reduction_FP_IKP},
we prove the theorems.

\subsection{Differences in instances of $\PI{2}\AIKP$ and $\COFROB$}
\label{subsec:difference_IKP_FP}

The definition of $\PI{2}\AIKP$ is similar to that of $\COFROB$.
However, there are three main differences.
The first difference is whether given positive integers may include integer $1$.
An instance of $\PI{2}\AIKP$ may include $1$,
although all given positive integers are greater than or equal to $2$ in $\COFROB$.
The second difference is the number of given positive integers.
The number is greater than or equal to $0$ in $\PI{2}\AIKP$,
whereas the number is greater than or equal to $2$ in $\COFROB$.
The third difference is whether all given positive integers are coprime.
They may be not coprime in $\PI{2}\AIKP$,
whereas they are subject to be coprime in $\COFROB$.

\subsection{Main Ideas of Our Reduction}
\label{subsec:ideas_reduction_FP_IKP}

In this subsection, we describe the main ideas of our reduction.
Let $Q = (A, \lambda)$ be an instance of $\PI{2}\AIKP$,
where $|A| = n$ for some $n \geq 0$.
Let $a_1,\cdots,a_n$ be all elements of $A$,
where $1 \leq a_1 < \cdots < a_n$.
If $a_1 \geq 2$, $|A| \geq 2$, and 
$a_1,\cdots,a_n$ are coprime, 
then $Q$ is also an instance of $\COFROB$.

If $a_1 = 1$, then $Q$ is a yes instance,
not depending on the other elements of $A$ and the integer $\lambda$.
In this case,
any positive integer can be written as
a multiple of $a_1$, i.e.,
as a nonnegative integer combination of $A$.
Thus, in this case, it suffices to correspond $Q$ to
a yes instance of $\COFROB$.

In the case where $a_1 \neq 1$,
if $|A| = 1$ or $a_1,\cdots,a_n$ are not coprime,
then $Q$ is a no instance of $\PI{2}\AIKP$.
For example, if $A = \{2\}$, i.e., $a_1 = 2$,
then no odd integer can be written as
a multiple of $a_1$.
As another example,
we observe the case where $A = \{3,6,9\}$ and $\lambda = 11$.
Then, the interval that we should check is $[11,12,13]$.
However, we cannot represent integers $11$ and $13$
as a nonnegative integer combination of $A$
since these are not multiples of $k$.
Hence, $Q$ is a no instance.
Thus, in the case where $a_1 \neq 1$,
if $|A| = 1$ or $a_1,\cdots,a_n$ are not coprime,
then it suffices to correspond $Q$ to
a no instance of $\COFROB$.

In $\COFROB$,
if the number of input integers
is two, then
we know the Frobenius number by a formula~(Section~2.1 in \cite{ramirezAlfonsin2005diophantine}).
Furthermore, given any instance $(A,k)$,
every integer greater than $g(A)$
can be represented as a nonnegative integer combination of $A$.
Thus, we can easily find both yes instances and no instances.
For example,
we observe the case in which $\{3,4\}$ is a given input.
Then, since the Frobenius number is $3\cdot4 - 3 -4 = 5$,
the integer $5$ 
cannot be represented as a nonnegative integer combination $\{3,4\}$,
and each of the integers $6,7,8$ 
can be represented as a nonnegative integer combination of  $\{3,4\}$.
Thus, we find $(\{3,4\}, 6)$ to be a yes instance and
$(\{3,4\}, 5)$ to be a no instance.

In $\COFROB$, an instance $(B,k)$ is a yes instance
if all integers in a ``sufficiently'' large interval
can be represented as a nonnegative integer combination of $B$.
We do not know the smallest sufficient length of intervals.
However, if all integers in $[l,l+\min B-1]$
can be represented as a nonnegative integer combination of $B$
for some integer $l$,
then $g(B)$ is less than $l$.

\subsection{A Main Theorem}
\label{subsec:formulation_reduction_FP_IKP}

In this subsection, by showing the following theorem,
we obtain the $\SIGMAP{2}$-completeness of $\FROB$ under Karp reductions.

\begin{theorem}
\label{thm:5}
$\COFROB$ is $\PIP{2}$-complete under Karp reductions. 
\end{theorem}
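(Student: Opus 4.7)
The plan is to prove Theorem~\ref{thm:5} in two parts — membership $\COFROB \in \PIP{2}$ and $\PIP{2}$-hardness via a Karp reduction from $\PI{2}\AIKP$, which is $\PIP{2}$-complete by Theorem~\ref{thm:1}. The $\SIGMAP{2}$-completeness of $\FROB$ then follows immediately by complementation, yielding the paper's main theorem.

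For membership, I would exploit the characterization that $g(A) < k$ if and only if every integer in $[k, k + \min A - 1]$ is representable as a nonnegative integer combination of $A$. The forward direction is immediate; for the converse, any $m \geq k$ decomposes as $m = (k + r) + q \cdot \min A$ with $q \in \mathbb{N}$ and $r \in [0, \min A - 1]$, so representability of $k + r$ propagates to $m$ by adding $q$ copies of $\min A$. Hence $\COFROB$ is decidable by universally guessing $m \in [k, k + \min A - 1]$ (whose binary encoding has polynomially bounded length) and existentially verifying that $m$ is a nonnegative combination of $A$, the latter being an instance of $\IKP \in \NP$. Thus $\COFROB \in \PIP{2}$.

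For hardness, I would construct a Karp reduction $\rho$ from $\PI{2}\AIKP$ to $\COFROB$ that resolves the three discrepancies catalogued in Subsection~\ref{subsec:difference_IKP_FP}. Given $(A, \lambda) \in \PI{2}\AIKP$, three cases arise. If $\min A = 1$, then every positive integer is a multiple of $1$ and $(A, \lambda)$ is a trivial yes instance, so I set $\rho(A, \lambda) := (\{2, 3\}, 2)$, a yes instance of $\COFROB$ since $g(\{2, 3\}) = 1 < 2$. If $\min A \geq 2$ and either $|A| \leq 1$ or $\gcd(A) \geq 2$, then $(A, \lambda)$ is a no instance: in the single-element case $A = \{a\}$ with $a \geq 2$, the interval $[\lambda, \lambda + a - 1]$ contains $a$ consecutive integers among which at most one is a multiple of $a$, and in the non-coprime case with $\gcd(A) = d \geq 2$ the same interval has length $\min A$, itself a multiple of $d$, so it contains non-multiples of $d$; I set $\rho(A, \lambda) := (\{2, 3\}, 1)$, a no instance of $\COFROB$ since $g(\{2, 3\}) = 1 \not< 1$. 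Otherwise $A$ already satisfies the coprimality, size, and $\min A \geq 2$ conditions of $\COFROB$, and I take $\rho(A, \lambda) := (A, \lambda)$; correctness follows from the interval characterization above matched against the definition of $\PI{2}\AIKP$, whose question is precisely whether every integer in $[\lambda, \lambda + \min A - 1]$ is representable.

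The reduction $\rho$ is polynomial-time computable since $\min A$ and $\gcd(A)$ are computable in polynomial time in the binary input size. I do not anticipate a substantial technical obstacle here — the main care lies in verifying that the degenerate cases are exhaustive and that the fixed witnesses $(\{2, 3\}, 2)$ and $(\{2, 3\}, 1)$ behave as intended; the real work has already been done in Theorem~\ref{thm:1}, where the $\PIP{2}$-hardness of the Frobenius-flavored integer knapsack problem was established.
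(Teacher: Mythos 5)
Your proposal is correct and follows essentially the same route as the paper: membership via a universal quantifier over a polynomially bounded range of target integers with an $\IKP$ oracle for the existential check, and hardness via the identical three-case Karp reduction from $\PI{2}\AIKP$ (the paper uses the fixed witnesses $(\{3,4\},6)$ and $(\{3,4\},5)$ where you use $(\{2,3\},2)$ and $(\{2,3\},1)$). The only notable difference is that the paper bounds the universally quantified range by Wilf's upper bound $a_n^2$, whereas you use the $[k,\,k+\min A-1]$ window characterization; both are valid.
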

\begin{proof}
 We first prove the membership of $\COFROB$ to $\PIP{2}$.
 Let $(A,k)$ be an instance of $\COFROB$.
 We denote all elements of $A$ by $a_1, \cdots, a_n$.
 Let $U$ be $a_n^2$. 
 $U$ is known to be an upper bound of $g(A)$ due to Wilf~\cite{Wilf1978}.
 In other words, to check whether every integer $m \geq k$  
 can be represented as a nonnegative integer combination of $A$, 
 it suffices to check all integers that are less than or equal to $U$.
 By definition, $U$ can be represented
 as a binary representation whose length is of polynomial order in the size of $(A,k)$.
 Let $(A,k)$ be a yes instance. 
 Then, for all integers $m$ in the interval $[k,U]$,
 we can determine whether $m$  
 can be represented as a nonnegative integer combination of $A$ 
 in polynomial time 
 by using $\IKP$ as its oracle.
 $\IKP$ is known to be an $\NP$-complete problem%
 ~(Section 15.7 in \cite{Papadimitriou:1982:COA:31027}).
 Consequently, $\COFROB$ is in $\PIP{2}$.

 Then, we prove the $\PIP{2}$-hardness of $\COFROB$
 by constructing a reduction from $\PI{2}\AIKP$.
 In particular,
 we formulate the reduction
 informally described in Subsection~\ref{subsec:ideas_reduction_FP_IKP}.
 Let $Q = (A, \lambda)$ be an instance of $\PI{2}\AIKP$.
 Let $a_1,\cdots,a_n$ be all elements of $A$, 
 where $1 \leq a_1 < \cdots < a_n$.
 We construct an instance $\phi(Q) = (\phi(A), \phi(\lambda))$  
 of $\COFROB$, where $\phi(Q)$ is defined as follows.
 \begin{eqnarray*}
 (\phi(A), \phi(\lambda)) =
 \begin{cases}
 (A,\lambda) & \text{if }a_1 \geq 2; 
  |A| \geq 2; 
 \text{and } \text{all elements of } A \text{ are coprime}, 
 \\
 (\{3,4\},6) &
 \text{if } a_1 = 1,
 \\
 (\{3,4\},5) &
 \text{otherwise}.
 \end{cases}
 \end{eqnarray*}

 We can determine whether $a_1 \geq 2$ and
 whether $|A| \geq 2$ in linear time.
 Furthermore, we can check the coprimality of $A$ 
 in polynomial time since we can compute its coprimality
 by using Euclid's algorithm at most $n$ times.
 Euclid's algorithm can be executed in 
 polynomial time~(Section 4.5.2 in \cite{KnuthArtComputerProgrammingSeminumericalAlgorithmsEd3}).
 Consequently, given $Q$, we can construct $\varphi(Q)$ in polynomial time.
 By the description in Subsection~\ref{subsec:ideas_reduction_FP_IKP},
 the reduction is valid.
 The proof of Theorem~\ref{thm:5} is complete.
\end{proof}

As a corollary, we obtain the main theorem.

\begin{theorem}
\label{thm:4} 
$\FROB$ is $\SIGMAP{2}$-complete under Karp reductions.
\end{theorem}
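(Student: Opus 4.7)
The plan is to derive Theorem~\ref{thm:4} as an immediate corollary of Theorem~\ref{thm:5} by invoking the standard duality between $\SIGMAP{2}$ and $\PIP{2}$ under Karp reductions. The key observation is that $\FROB$ and $\COFROB$ have identical instance spaces (pairs $(A,k)$ satisfying the stated conditions) and their yes-predicates $g(A) \geq k$ and $g(A) < k$ are logical negations of each other, so as formal languages they are complements in the usual sense.

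First I would establish the membership $\FROB \in \SIGMAP{2}$. Theorem~\ref{thm:5} already gives $\COFROB \in \PIP{2}$. From the definitions in Subsection~\ref{subsec:complexity_classes} we have $\SIGMAP{2} = \NP^{\SIGMAP{1}}$ and $\PIP{2} = \CONP^{\SIGMAP{1}}$, so $L \in \SIGMAP{2}$ precisely when its complement lies in $\PIP{2}$. Applying this equivalence to $L = \FROB$ yields the desired membership.

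Next I would prove $\SIGMAP{2}$-hardness by transferring the Karp reduction through complementation. Given any $L \in \SIGMAP{2}$, the complement $\overline{L}$ lies in $\PIP{2}$, so by the $\PIP{2}$-hardness of $\COFROB$ established in Theorem~\ref{thm:5} there is a polynomial-time computable function $f$ satisfying $x \in \overline{L}$ iff $f(x) \in \COFROB$. Since $\FROB$ is the complement of $\COFROB$, the same $f$ witnesses $x \in L$ iff $f(x) \in \FROB$, so $f$ is a Karp reduction from $L$ to $\FROB$. I do not anticipate any genuine obstacle: all the mathematical substance has been absorbed into Theorem~\ref{thm:5} (and, through it, into the chain back to Theorem~\ref{thm:1}), and the present step is a purely formal transfer along complementation, which is symmetric for many-one reductions.
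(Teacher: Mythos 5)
Your proposal is correct and is exactly the route the paper takes: the paper states Theorem~\ref{thm:4} as an immediate corollary of Theorem~\ref{thm:5}, relying on the same standard duality you spell out (a language is in $\SIGMAP{2}$ iff its complement is in $\PIP{2}$, and a Karp reduction to $\COFROB$ from $\overline{L}$ is simultaneously a Karp reduction to $\FROB$ from $L$). You have merely made explicit the routine complementation step that the paper leaves implicit.
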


\section{The Complexity of the Original Version of the Frobenius Problem}
\label{sec:completeness_FFROB}

In this section, we describe the computational complexity
of the original version of the Frobenius problem $\FFROB$.
Theorem~\ref{thm:4} immediately 
implies an upper bound for the complexity of $\FFROB$,
which is the first nontrivial upper bound. 
Moreover, we can also obtain an improved lower bound.
Additionally, we discuss the further improvement of these bounds.

\subsection{Upper Bounds for the Complexity Class of $\FFROB$}
\label{subsec:upper_bound_EFROB_and_FROBF}

By using Theorem~\ref{thm:4},
we can derive an upper bound for the complexity class of $\FFROB$
using usual methods in complexity theory as follows.

\begin{theorem}
\label{thm:3}
$\FFROB$ is in $\FUNC\DELTAP{3}$.
\end{theorem}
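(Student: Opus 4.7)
The plan is to reduce $\FFROB$ to a polynomial-time binary search that consults $\FROB$ as an oracle. By Theorem~\ref{thm:4}, $\FROB \in \SIGMAP{2}$, so any such search places $\FFROB$ in $\FP^{\SIGMAP{2}} = \FUNC\DELTAP{3}$.

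First I would invoke Wilf's upper bound \cite{Wilf1978}, which gives $g(A) \leq a_n^2$ for every instance $A = \{a_1, \ldots, a_n\}$ with $a_1 < \cdots < a_n$. Since $a_n^2$ has binary length at most twice that of $a_n$, the integer $U = a_n^2$ can be written down in time polynomial in the binary input size. Thus $g(A)$ is guaranteed to lie in the interval $[0, U]$, and that interval is ``short'' in the sense that $\log U$ is polynomial in the input length.

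Next, on input $A$, the algorithm would perform a binary search on $[0, U]$ to find the largest integer $k$ such that $(A,k)$ is a yes instance of $\FROB$, and output that $k$ as $g(A)$. Each query has the form $(A, k)$ with $k \in [0, U]$, so each query has polynomial size in the original input, and each is answered by a single call to a $\SIGMAP{2}$ oracle by Theorem~\ref{thm:4}. The number of queries is $O(\log U) = O(\log a_n)$, which is polynomial in the binary input size. Hence the whole computation runs in polynomial time with a $\SIGMAP{2}$ oracle, placing $\FFROB$ in $\FP^{\SIGMAP{2}} = \FUNC\DELTAP{3}$.

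There is no real obstacle in this argument; the two ingredients that make it go through have already been secured elsewhere, namely the $\SIGMAP{2}$ upper bound on $\FROB$ coming from the preceding sections and the classical polynomial-bit-length upper bound on $g(A)$ from \cite{Wilf1978}. The only point to verify with a bit of care is that all intermediate quantities (in particular $U$ and the midpoints of the binary search) remain of polynomial bit length, which is immediate from $U \leq a_n^2$.
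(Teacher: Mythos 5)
Your proof is correct and follows essentially the same route as the paper: a binary search over $[0, a_n^2]$ (the paper starts at $a_1$, which is immaterial) using $\FROB$ as a $\SIGMAP{2}$ oracle, justified by Wilf's bound and Theorem~\ref{thm:4}, yielding membership in $\FP^{\SIGMAP{2}} = \FUNC\DELTAP{3}$.
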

\begin{proof}
Recall that $\FFROB$ is identical to $g$.
For proving the theorem,
it suffices to show that
we construct a polynomial-time algorithm
that outputs $g(A)$ with a $\SIGMAP{2}$-oracle $L$
for a given input $A$.
Let $a_1,\cdots,a_n$ be all elements in $A$,
where $a_1 < \cdots < a_n$.
Let $U_A$ be $a_n^2$, 
which is an upper bound for $g(A)$~\cite{Wilf1978}.
We find the Frobenius number $g(A)$ among integers in $[a_1,U_A]$
by the binary search in Algorithm~\ref{alg:2}.

\begin{algorithm}
\SetAlgoNoLine
\KwIn{$A = \{a_1,\cdots,a_n\}$.}
\KwOut{$g(A)$.}
$l \longleftarrow a_1$\;
$u \longleftarrow U_A$\;
$v \longleftarrow \lfloor (l+u)/2 \rfloor$\;
\While{$u \neq l$}{
\uIf{$g(A) \geq v$}{$l \longleftarrow v+1$\;}
\Else{$u \longleftarrow v-1$\;}
$v \longleftarrow \lfloor (l+u)/2 \rfloor$\;
}
\KwRet $v$\;
\caption{Binary Search for the Frobenius Number}
\label{alg:2}
\end{algorithm}

For the comparison $g(A) \geq v$ at Step $3$,
we use $\FROB{}$ as an oracle.
By this oracle, we can run the above algorithm 
in polynomial time.
By Theorem~\ref{thm:4},
$\FFROB$ is in $\FP^{\SIGMAP{2}}$, i.e., $\FUNC\DELTAP{3}$.
\end{proof}

\subsection{Lower Bounds for the Complexity Class of $\FFROB$}
\label{subsec:lower_bound_EFROB_and_FROBF}

By Theorem~\ref{thm:4},
$\FFROB$ is at least as hard as any $\SIGMAP{2}$ problem.
This is because 
we can immediately determine whether $g(A) \geq k$
by computing $g(A)$, i.e., solving $\FFROB$ for a given instance $(A,k)$ of $\FROB$.
However, there should be a better lower bound for the complexity of $\FFROB$.
Ideally, we are expected to find a complexity class of functions as a better lower bound.

We obtained $\FUNC\DELTAP{3}$ as an upper bound
in Subsection~\ref{subsec:upper_bound_EFROB_and_FROBF}
Thus, it is natural to ask whether $\FFROB$ is $\FUNC\DELTAP{3}$-hard.
This question appears to be not easy.
Known $\FUNC\DELTAP{3}$-complete problems are few~\cite{doi:10.1137120904524}.
Moreover, the structures of such $\FUNC\DELTAP{3}$-complete problems and $\FFROB$
are quite different. 
Thus, constructing a reduction should require some sophisticated techniques.
A class of functions, $\SIGMAMM{2}$, is another candidate of a lower bound,
which was introduced in \cite{KRENTEL1992183}.
This class is a subclass of $\FUNC\DELTAP{3}$.
Every function of $\SIGMAMM{2}$ is specified by
a type of alternating Turing machine, called a {\itshape polynomial $k$-alternating max-min Turing machine}. 
Unfortunately, known results are scarce.

Natural characterizations of $\FUNC\DELTAP{k}$ and $\SIGMAP{k}$ 
are given as generalizations for other complexity classes~\cite{KRENTEL1992183}.
However, some classes of functions have only characterizations by
specific structures. 
For example,
$\PLS$~\cite{JOHNSON198879} and $\PPAD$~\cite{PAPADIMITRIOU1994498} 
are known to be such classes of functions.
$\PLS$ and $\PPAD$
are known to have complete problems~\cite{JOHNSON198879,PAPADIMITRIOU1994498}.
By an analogy of $\PLS$ and $\PPAD$,
we may obtain a suitable complexity class for $\FFROB$.
However, this paper does not cover this approach since
it appears to require a brand-new investigation for the property of $\FFROB$.

By the above discussion,
finding a function class that is harder than $\SIGMAP{2}$ is not easy.
The next best approach is 
finding a decision problems class that is harder than $\SIGMAP{2}$ and easier than $\FUNC\DELTAP{3}$.
Thus, it is worthwhile to investigate a problem $\EFROB$ and a class $\DIFFP{2}$ of decision problems. 
By definition, the class $\DIFFP{2}$ is harder than $\SIGMAP{2}$.
$\EFROB$ is expected to be harder than $\FROB$ by an analogy 
to relationships between three variants of the traveling salesman problem%
~\cite{PAPADIMITRIOU1984244}.
The first variant is deciding the existence of a Hamilton path
whose cost is at most $k$ for a given graph $G$ and a cost $k$.
The second one is deciding the existence of a Hamilton path
whose cost is exactly $k$ for a given graph $G$ and a cost $k$.
The third one is computing a Hamilton path
whose cost is minimum for a given graph $G$.
The third one is known to be harder than the second,
and the second is known to be harder than the first~\cite{PAPADIMITRIOU1984244}.
Moreover, the second one was proven to be $\DP$-complete~\cite{PAPADIMITRIOU1984244}.
If we prove the $\DIFFP{2}$-hardness of $\EFROB$,
then we may obtain a better lower bound for $\FFROB$.
However, the analysis for the $\DIFFP{2}$-hardness of $\EFROB$ is not easy.
In this paper,
we prove only the membership of $\EFROB$ to $\DIFFP{2}$. 

\begin{theorem}
\label{thm:2}
$\EFROB$ is in $\DIFFP{2}$.
\end{theorem}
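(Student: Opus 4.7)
The plan is to exploit the obvious decomposition $g(A)=k$ iff $g(A)\geq k$ and $g(A)\leq k$, which directly matches the definition of $\DIFFP{2}$ as intersections of a $\SIGMAP{2}$ language with a $\PIP{2}$ language. First I would define
\begin{align*}
L_1 &= \{(A,k) : g(A) \geq k\},\\
L_2 &= \{(A,k) : g(A) \leq k\},
\end{align*}
so that an instance $(A,k)$ of $\EFROB$ is a yes instance iff $(A,k)\in L_1\cap L_2$.

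Next I would verify the complexity of each piece. By definition $L_1$ is exactly $\FROB$, which belongs to $\SIGMAP{2}$ by Theorem~\ref{thm:5} (since $\COFROB$ is $\PIP{2}$-complete, its complement $\FROB$ is $\SIGMAP{2}$-complete, and in particular in $\SIGMAP{2}$). For $L_2$, I would observe the elementary equivalence $g(A)\leq k$ iff $g(A) < k+1$, so the map $(A,k)\mapsto (A,k+1)$ is a polynomial-time Karp reduction from $L_2$ to $\COFROB$. Since $\COFROB \in \PIP{2}$ and $\PIP{2}$ is closed under Karp reductions, this gives $L_2\in\PIP{2}$.

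Combining these two observations, $\EFROB = L_1 \cap L_2$ with $L_1\in\SIGMAP{2}$ and $L_2\in\PIP{2}$, so $\EFROB\in\DIFFP{2}$ directly from the definition of $\DIFFP{2}$ given in Subsection~\ref{subsec:complexity_classes}. There is essentially no technical obstacle here; the only point requiring care is the small index shift when translating ``$g(A)\leq k$'' into an instance of $\COFROB$, whose question is phrased as ``$g(A) < k$''. Everything else follows from the previously established $\PIP{2}$-completeness of $\COFROB$.
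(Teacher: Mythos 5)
Your proposal is correct and follows essentially the same route as the paper: both decompose $\EFROB$ as the intersection of $L_1=\FROB\in\SIGMAP{2}$ with the language $L_2=\{(A,k):g(A)\leq k\}$, handled via the index shift $(A,k)\mapsto(A,k+1)$ into $\COFROB\in\PIP{2}$. No gaps.
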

\begin{proof}
It suffices to show that there are two problems 
$L_1 \in \SIGMAP{2}$ and $L_2 \in \PIP{2}$
such that $L_1 \cap L_2 = \EFROB$.
Let $L_1$ be $\FROB{}$.
Let $L_2$ be a language such that
$(A, k) \in L_2$ if and only if
$(A, k+1) \in \COFROB$.
Obviously, the language $L_2$ is in $\PIP{2}$.
Let $(A,k)$ be an instance of $\EFROB$.
Then, $(A,k)$ is a yes instance of $\EFROB$, i.e., $g(A) = k$
if and only if $(A,k)$ and $(A,k+1)$ are yes instances of $L_1$ and $L_2$,
i.e., $g(A) \geq k$ and $g(A) < k+1$.
It follows that $\EFROB$ is in $\DIFFP{2}$.
\end{proof}

We describe some reasons for the difficulties of 
the proof or disproof for the $\DIFFP{2}$-hardness
of $\EFROB$.
The existing $\DIFFP{2}$-complete problems
can be categorized into two types.
The first is a type of problem such as
$\SIGMAPISAT{2}$~\cite{WOOLDRIDGE200427}.
The second is a type of problem such as
$\INCOMPLETEGAME$~\cite{WOOLDRIDGE200427}.
Any instance of $\SIGMAPISAT{2}$~\cite{WOOLDRIDGE200427}
is specified by a pair of instances of $\SIGMAP{2}$- and $\PIP{2}$-problems.
This type of $\DIFFP{2}$-complete problem is immediately obtained
from a number of $\SIGMAP{2}$- or $\PIP{2}$-complete problems~\cite{schaefer2002completeness}.
We can prove that $\FROB\text{-}\COFROB$ is
$\DIFFP{2}$-complete by a reduction 
similar to that in the proof of Theorem~\ref{thm:4},
Any reduction from this type of problem to $\EFROB$ 
appears to require an outstanding result
in number theory.

The second type of problem is problems such as a cooperative game in~\cite{WOOLDRIDGE200427},
called $\INCOMPLETEGAME$. Three variants of package recommendation problems in~\cite{doi:10.1137120904524}
are also of this type.
This type of problem has the following properties.
Any instance of such a problem is specified by a complicated formulation.
For example, an instance of 
$\INCOMPLETEGAME$ has two types of variables, called agents and goals,
and a relation between the two sets of variables.
Such formulations are natural and reasonable
since those studies were motivated by concerns for the computational complexities
of problems in a specific research domain.
Indeed,
\cite{WOOLDRIDGE200427} studied the computational complexities of $10$ or more cooperative games 
including $\INCOMPLETEGAME$.
\cite{doi:10.1137120904524} investigated many properties of recommendation systems
from a complexity theoretical perspective.
In compensation for a sufficiently powerful description,
the formulation of this type of problem is too complicated 
to use as a reduced problem in a proof for the hardness of another problem.
That is, the reduction from this type of problem to $\EFROB$ may require
a novel technique of formulation or patience for a long description.

\section{The Strength of the $\SIGMAP{2}$-Hardness of the Frobenius Problem}
\label{sec:strength}

In this section,
we analyze the $\SIGMAP{2}$-hardnesses of $\FROB$ and $\FFROB$ more precisely.
In Section~\ref{sec:completeness_FROB},
we proved the $\SIGMAP{2}$-completeness of $\FROB$.
However,
many researchers have considered solving $\FFROB$ practically fast,
as described in Subsection~\ref{subsec:1_2}.
Their algorithms are exact algorithms.
Thus, the approximability of $\FFROB$ is an interesting subject.
We do not know the existence
of an approximation algorithm or any inapproximability for $\FFROB$.
However, there is a pseudopolynomial algorithm for $\FFROB$.
To the best of the author's knowledge,
an explicit statement has not been provided,
but some reports in the literature implicitly suggest the existence of
a pseudopolynomial algorithm, 
e.g., \cite{nijenhuis1979,beihoffer2005faster,BockerSebastiAndLiptakZsuzsanna2007}.
By this fact and Theorem~\ref{thm:4},
we obtain the following.

\begin{theorem}
\label{thm:6}
(1) $\FROB$ is weakly $\SIGMAP{2}$-complete.
(2) $\FFROB$ is weakly $\SIGMAP{2}$-hard.
\end{theorem}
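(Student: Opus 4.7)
The plan is to derive both parts of the statement by combining Theorem~\ref{thm:4} with the observation that the literature provides pseudopolynomial algorithms for $\FFROB$. Specifically, (1) asks for $\SIGMAP{2}$-completeness of $\FROB$ plus a pseudopolynomial solver, while (2) asks for $\SIGMAP{2}$-hardness of $\FFROB$ plus a pseudopolynomial solver. The $\SIGMAP{2}$-completeness of $\FROB$ is already given by Theorem~\ref{thm:4}, and the $\SIGMAP{2}$-hardness of $\FFROB$ follows immediately, since any oracle that returns $g(A)$ lets us answer $\FROB$ on $(A,k)$ in one additional comparison with~$k$. Thus the entire content to establish is the existence of pseudopolynomial algorithms.

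To exhibit a pseudopolynomial algorithm for $\FFROB$, I would appeal to the Nijenhuis shortest-path characterization~\cite{nijenhuis1979}: for an input $A = \{a_1,\cdots,a_n\}$ one constructs a weighted directed graph on $a_1$ nodes (indexed by residues modulo $a_1$) whose arcs have weights among $a_2,\cdots,a_n$, and $g(A)$ is then obtained from a single-source shortest-path computation followed by a simple max-minus subtraction. The size of this graph is polynomial in $a_1 \leq \max A$ and in $n$, so Dijkstra's algorithm runs in time polynomial in the unary size of the instance, and in fact polynomial in the binary size as well once $a_1$ is written in unary. The refinements by Beihoffer, Hendry, Nijenhuis, and Wagon~\cite{beihoffer2005faster} and by B\"ocker and Lipt\'ak~\cite{BockerSebastiAndLiptakZsuzsanna2007} preserve this pseudopolynomial bound. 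This settles the pseudopolynomial part of~(2).

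For~(1), I would reduce to the case of $\FFROB$. Given an instance $(A,k)$ of $\FROB$, first compute $g(A)$ using the algorithm above, which runs in time polynomial in $\max A$ and $n$, i.e. polynomial in the unary size of $(A,k)$. Then compare $g(A)$ with $k$; this comparison is polynomial in the binary size of $k$ (which itself is at most the unary size by definition). Combining this pseudopolynomial decision procedure with the $\SIGMAP{2}$-completeness from Theorem~\ref{thm:4} yields weak $\SIGMAP{2}$-completeness of $\FROB$.

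The only genuine obstacle is making the notion of ``pseudopolynomial'' match the Nijenhuis-style running time: strictly speaking, one must verify that the relevant graph algorithms run in time polynomial in $\max A$ rather than merely in~$a_1$, and that encoding the output integer $g(A)$ (which may be as large as $a_n^2$) still fits within a polynomial-in-unary-size bound. Both points are routine: Wilf's bound $g(A) \leq a_n^2$ guarantees that $g(A)$ has binary length polynomial in the unary size, and the shortest-path computation in the residue graph on $a_1$ vertices clearly runs in time polynomial in $\max A$. No new arguments beyond Theorem~\ref{thm:4} and a citation to the existing pseudopolynomial algorithms should be required.
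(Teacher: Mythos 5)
Your proposal is correct and follows essentially the same route as the paper: both parts reduce to exhibiting a pseudopolynomial algorithm for computing $g(A)$, which is supplied by Nijenhuis's shortest-path construction on the residue graph modulo $a_1$, and then combining this with Theorem~\ref{thm:4}. The paper's proof likewise handles (1) and (2) together via this algorithm, so no substantive difference remains to discuss.
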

\begin{proof}
Statements (1) and (2) are proven in almost the same way.
Thus, we only prove statement (2).
By \cite{nijenhuis1979},
$g(A)$ can be computed by the following algorithm 
for a given set $A$ of coprime positive integers $a_1,\cdots,a_n$.
This algorithm is for a single-source shortest path search
for a weighted directed graph $G = (V, E)$ defined from $A$.
The set $V$ of vertices is defined as $[0,a_1-1]$,
and the set $E$ of edges is defined as 
$\{(i,j) \colon i\in V, j = (i+a_l)\mod a_1 \text{\ for some\ } 1 \leq l \leq n\}$.
For every edge $(i,j)$ in $E$,
the distance of $(i,j)$ is defined as $a_l$
such that $j = (i+a_l) \mod a_1$,
where $1 \leq l \leq n$.
Then, the distance $v$ between vertex $0$ and the farthest vertex 
is known to be equal to $g(A)+a_1$~\cite{nijenhuis1979}.
This algorithm runs in time with an order polynomial in $n$ 
but in time with an order exponential in $\log a_1$.
However, if we restrict instances of $\FROB$ to
pairs $E=\{e_1,\cdots,e_m\}$ such that $m = \Omega(e_1)$,
then the above algorithm is a polynomial-time algorithm 
for $n$ and $\sum_{i=1}^n(\lfloor \log e_i \rfloor + 1)$.
It follows that the algorithm of \cite{nijenhuis1979}
is a pseudopolynomial algorithm.
Thus, by Theorem~\ref{thm:4},
$\FFROB$ is a weakly $\SIGMAP{2}$-hard problem.
\end{proof}

Many weakly $\NP$-hard problems
have a fully polynomial-time approximation scheme (FPTAS)~\cite{Garey:1978:SNR:322077.322090}.
Thus, $\FFROB$ might also have an FPTAS.
However, it is known that not all weakly $\NP$-hard problems have an FPTAS~\cite{433837220000101}.
Moreover,
there has not been much attention on approximation for $\SIGMAP{2}$-hard problems.
A result for the approximability of $\FFROB$ may have a significant effect
for computational complexity theory.

\section{Conclusions and Future Work}
\label{sec:conclusion}

In this paper, as a main theorem,
we proved the $\SIGMAP{2}$-completeness of a decision version of the Frobenius problem
under Karp reductions.
This result is an answer for the long-standing open problem proposed 
by Ram\'irez-Alfons\'in (Section A.1 in \cite{ramirezAlfonsin2005diophantine}).
This result provided the first nontrivial upper bound 
and an improved lower bound for the computational complexity
of the original version of the Frobenius problem.
Moreover, as a further improvement trial, 
we proved the membership of $\EFROB$ to $\DIFFP{2}$.
For developing practically fast algorithms,
although our $\SIGMAP{2}$-completeness proof for $\FROB$ showed a negative fact,
we also pointed out a positive aspect that $\FFROB$ is not strongly $\SIGMAP{2}$-hard.
On the other hand,
this paper leaves the following questions open.

\begin{conjecture}
Is $\EFROB$ $\DIFFP{2}$-complete under Karp reductions?
\end{conjecture}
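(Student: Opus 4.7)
Since Theorem~\ref{thm:2} already establishes membership $\EFROB\in\DIFFP{2}$, the conjecture reduces to proving $\DIFFP{2}$-hardness under Karp reductions. My plan is to reduce from $\FROBCOFROB$, which the author indicates is $\DIFFP{2}$-complete via a straightforward pairing variant of the Theorem~\ref{thm:4} construction. Given an instance $(A_1,k_1;A_2,k_2)$, the objective is to produce a single pair $(B,k)$ such that $g(B)=k$ if and only if $g(A_1)\geq k_1$ and $g(A_2)<k_2$.

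\textbf{Construction sketch.} The strategy I would pursue is a scaled direct-sum construction, in the spirit of the $\DP$-completeness proof for the exact-cost traveling salesman problem by Papadimitriou and Yannakakis. Pick a polynomially bounded integer $M$ that is coprime to every element of $A_1\cup A_2$ and strictly larger than Wilf's upper bound $(\max A_2)^2$ for $g(A_2)$. Build $B$ as the disjoint union of $M\cdot A_1$, a scaled and shifted copy $M'\cdot A_2 + s$ of $A_2$, and a small set of glue generators whose role is to enforce joint coprimality, fix $\min B$, and prevent trivial representations. The intended behaviour is that each residue class modulo $\min B$ has a unique minimal representative that decomposes as a witness from the $A_1$-block plus a witness from the $A_2$-block, so that the Frobenius number of $B$ becomes an affine function of the form $M\cdot g(A_1) + f(g(A_2))$ computable in polynomial time from the input. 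Setting $k := M\cdot k_1 + f(k_2-1)$ and coupling with a second gadget that pins the $A_2$-side contribution to its maximal admissible value should give $g(B)=k$ precisely on yes instances of $\FROBCOFROB$.

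\textbf{Main obstacle.} The hard part is the rigidity of the Frobenius number under union. Unlike traveling-salesman costs, which combine additively and independently under disjoint union of subinstances, Frobenius numbers of scaled unions interact through the Ap\'ery set modulo $\min B$, and mixed "cross-term" representations that combine generators from the $M\cdot A_1$ block with generators from the $M'\cdot A_2 + s$ block can produce parasitic large non-representable integers that displace $g(B)$ away from the intended value. Rigorously verifying the construction will hinge on a separation lemma showing that the Ap\'ery set of $B$ factors cleanly into an $A_1$-component and an $A_2$-component; establishing such a decomposition seems to require a delicate choice of $M$, $M'$, and $s$, together with number-theoretic carry-control arguments across the two blocks. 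Should this cross-term control prove intractable in its pure form, a fallback is to reduce directly from $\SIGMAPISAT{2}$ by running the Karp reductions of Sections~\ref{sec:completeness_PiAIK} and~\ref{sec:completeness_FROB} in parallel on the $\SIGMA{2}\SAT$ and $\PI{2}\SAT$ halves and fusing the two outputs through an exact-value gadget; but this route still reduces to an essentially identical Ap\'ery-decomposition question, so the core combinatorial difficulty appears unavoidable and is what makes the conjecture genuinely open.
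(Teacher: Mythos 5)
The statement you are addressing is posed in the paper as an open conjecture: the paper proves only the membership $\EFROB\in\DIFFP{2}$ (Theorem~\ref{thm:2}) and explicitly argues that the hardness direction is out of reach, so there is no proof in the paper to compare against. Your proposal is, by your own admission in the final paragraph, a research plan whose central lemma is unproven rather than a proof, and the gap you name is real and in my view fatal to the sketch as written. The key unproved step is the ``separation lemma'' asserting that the Ap\'ery set of $B=M\cdot A_1\cup(M'\cdot A_2+s)\cup(\text{glue})$ factors into an $A_1$-component and an $A_2$-component, yielding $g(B)=M\cdot g(A_1)+f(g(A_2))$. Unlike the exact-cost TSP reduction you invoke, where each edge is used at most once so digit-scaling cleanly separates the two halves, a representation $\sum x_ib_i$ may use each generator of $B$ arbitrarily many times, so multiples of generators from one block spill without bound into the digit positions reserved for the other; the carry-control technique of Section~\ref{sec:completeness_PiAIK} relies on the addition-limiting digit to bound the number of summands, and no analogous bound is available here.

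Even granting the separation lemma in its strongest known form (the gluing formula $g(\langle pA_1\cup qA_2\rangle)=p\,g(A_1)+q\,g(A_2)+pq$ for suitable coprime $p,q$), the reduction from $\FROBCOFROB$ still fails at the final step: a yes instance only guarantees $g(A_1)\geq k_1$ and $g(A_2)<k_2$, so $g(A_1)$ ranges over all integers $\geq k_1$ and $g(A_2)$ over all integers $<k_2$, and the affine combination $M\,g(A_1)+f(g(A_2))$ therefore ranges over many values rather than hitting the single target $k$. Pinning both Frobenius numbers to exact values is precisely an instance of the exact problem you are trying to reduce \emph{to}, so this part of the construction is circular. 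Your fallback via $\SIGMAPISAT{2}$ inherits the same obstruction. The paper's own assessment stands: a Karp reduction to $\EFROB$ from either type of known $\DIFFP{2}$-complete problem appears to require a genuinely new number-theoretic decomposition result, and the conjecture remains open.
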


\begin{conjecture}
Is $\FFROB$ $\FUNC\DELTAP{3}$-complete 
or $\mathcal{C}$-complete under Levin reductions,
where $\mathcal{C}$ is some subclass of functions
of $\FUNC\DELTAP{3}$?

\end{conjecture}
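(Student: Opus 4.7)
The plan has two prongs, matching the disjunction in the conjecture. First, I would tighten the upper bound behind Theorem~\ref{thm:3} by observing that the binary search of Algorithm~\ref{alg:2} uses only $O(\log U_A) = O(\log a_n)$ queries to the $\FROB$ oracle, so in fact $\FFROB \in \FP^{\SIGMAP{2}[\log]}$, the class of functions computable in polynomial time with a logarithmic number of $\SIGMAP{2}$ queries (equivalently, the nonadaptive class $\FP^{\SIGMAP{2}}_{\parallel}$). Under the standard assumption that the polynomial hierarchy does not collapse near the second level, this class is strictly contained in $\FUNC\DELTAP{3}$; since Levin reductions preserve containment in query-bounded function classes, $\FUNC\DELTAP{3}$-completeness of $\FFROB$ would then be ruled out. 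I would therefore state the refutation of the first disjunct conditionally on the non-collapse assumption, and focus the positive work on the second disjunct.

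Second, for the completeness claim I would take $\mathcal{C} = \FP^{\SIGMAP{2}[\log]}$ as the primary candidate, with $\SIGMAMM{2}$ of~\cite{KRENTEL1992183} as a secondary candidate if a polynomial-alternating max-min characterisation can be aligned with the Frobenius number. Following Krentel's programme for $\FP^{\NP[\log]}$, a canonical $\FP^{\SIGMAP{2}[\log]}$-complete problem under Levin (or the closely related metric) reductions should be available in the form of a value-version of $\SIGMA{2}\SAT$, e.g.\ ``compute the largest $k \in [0,2^m-1]$ whose binary encoding makes $\varphi(k,\cdot,\cdot)$ a yes-instance of $\SIGMA{2}\SAT$''. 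The plan is then to construct a Levin reduction from this optimisation problem to $\FFROB$, arranging that the optimum value is encoded into $g(A)$ together with a polynomial-time inverse decoding map that recovers the optimum from $g(A)$.

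The hard part will be constructing this reduction. The chain $\PI{2}\XMATCH{3D} \to \PI{2}\AIKP \to \COFROB$ of Sections~\ref{sec:completeness_PiAIK}--\ref{sec:completeness_FROB} transmits only a single bit of information (whether the quantified sentence holds), whereas a Levin reduction must force $g(\phi(Q))$ to literally carry the numerical optimum of the source. Augmenting the gadgetry so that successive candidate optima produce Frobenius numbers separated by predictable, decodable offsets requires interleaving Krentel-style weighting and padding with the carry-sensitive addition-limiting construction of Subsection~\ref{subsubsec:construction_integers_intervals}, while simultaneously preserving the coprimality constraint on $A$ and the length-$\min A$ interval structure exploited in the reduction from $\PI{2}\AIKP$. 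Making the optimum uniquely recoverable from $g(A)$ under all of these constraints is where I expect the construction to fail first; a likely fallback is to prove completeness under a weaker metric reduction (as is standard for $\FP^{\NP[\log]}$) and to leave strict Levin-completeness, together with the non-collapse-conditional refutation of the $\FUNC\DELTAP{3}$-complete branch, as refinements of the conjecture.
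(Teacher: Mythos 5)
This item is posed in the paper as an \emph{open} conjecture: the paper offers no proof, only the upper bound $\FUNC\DELTAP{3}$ (Theorem~\ref{thm:3}) and a discussion of candidate classes such as Krentel's $\SIGMAMM{2}$. Your proposal must therefore be judged as a research plan, and it has a concrete flaw at its foundation. You claim that the binary search of Algorithm~\ref{alg:2} makes only $O(\log U_A)=O(\log a_n)$ oracle queries and conclude $\FFROB\in\FP^{\SIGMAP{2}[\log]}$. But membership in that class requires $O(\log N)$ queries where $N$ is the \emph{bit length of the input}, and $\log_2 a_n$ is itself $\Theta(N)$ in the worst case: binary search over $[a_1,a_n^2]$ costs about $2\log_2 a_n$ queries, i.e.\ linearly many in the input size, not logarithmically many. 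This is precisely the distinction Krentel draws between optimisation problems whose optimum has $O(\log N)$ bits (complete for the log-query class, like max clique size) and those whose optimum has polynomially many bits (complete for the full class $\FP^{\NP}$, like the TSP tour cost). Since $g(A)$ can be as large as $a_n^2$ and thus has $\Theta(N)$ bits, the analogy places $\FFROB$ as a natural candidate for full $\FUNC\DELTAP{3}$-completeness --- the opposite of what you argue.

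As a consequence, your conditional refutation of the first disjunct collapses, and your primary choice $\mathcal{C}=\FP^{\SIGMAP{2}[\log]}$ for the second disjunct is unmotivated: a Levin reduction that forces the numerical optimum of a $\SIGMA{2}\SAT$-style optimisation problem into $g(A)$ would have to target the full class $\FUNC\DELTAP{3}$ (or a max-min subclass such as $\SIGMAMM{2}$, which the paper itself flags as the more plausible candidate). Your candid identification of where the gadget construction is likely to fail --- making successive candidate optima produce decodably separated Frobenius numbers while preserving coprimality and the length-$\min A$ interval structure --- is the genuinely hard part and is correctly diagnosed, but as written the plan starts from a miscounted query bound and does not resolve the conjecture.
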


Many $\SIGMAP{2}$-complete problems have been known prior to this work~\cite{schaefer2002completeness}. 
However, among those $\SIGMAP{2}$-complete problems,
number theory problems are few.
A computational good property of the Frobenius problem can be derived by 
a future sophisticated result in number theory.
That is, our $\SIGMAP{2}$-completeness proof for $\FROB$ 
may become a trigger for resolving 
some important open questions in theoretical computer science, such as 
$\NP$ vs $\SIGMAP{2}$.
For example, there are the following possibilities.
By developing a proof technique that applies a new characterization 
of the Frobenius number,
we may prove the existence of some instances that
cannot be solved in nondeterministic polynomial time.
Conversely, 
by finding a subproblem of the Frobenius problem,
which can simulate a $\SIGMAP{2}$-complete problem
and can be solved in nondeterministic polynomial time,
we may obtain the result that $\NP = \SIGMAP{2}$.
Naturally, it appears to be highly unlikely,
and discovering such a subproblem is quite difficult.
However, 
since several subproblems that can be efficiently computed
are known~(Sections 3.3 and 3.7 in \cite{ramirezAlfonsin2005diophantine}), 
the existence of such an $\NP$ algorithm could be expected.

\end{document}